\def\useieeelayout{0}
\def\showall{0}
\newcommand{\inConf}[1]{\if\showall1{\color{green!50!black}In ACC: #1}\else\if\useieeelayout1{#1}\fi\fi}
\newcommand{\inArxiv}[1]{\if\showall1{\color{blue}In ArXiV: #1}\else\if\useieeelayout0{#1}\fi\fi}
\newtheorem{theorem}{Theorem}
\newtheorem{lemma}{Lemma}
\newtheorem{proposition}{Proposition}
\newtheorem{definition}{Definition}
\newtheorem{problem}{Problem}
\newtheorem{assumption}{Assumption}
\newtheorem{remark}{Remark}
\newtheorem{conjecture}[theorem]{Conjecture}
\newtheorem{corollary}{Corollary}
\definecolor{steelblue}{RGB}{70,130,180}
\definecolor{mycolor1}{rgb}{0.00000,0.44700,0.74100}%
\definecolor{mycolor2}{rgb}{0.85000,0.32500,0.09800}%
\definecolor{mycolor3}{rgb}{0.92900,0.69400,0.12500}%
\definecolor{mycolor4}{rgb}{0.49400,0.18400,0.55600}%
\definecolor{mycolor5}{rgb}{0.46600,0.67400,0.18800}%
\definecolor{mycolor6}{rgb}{0.30100,0.74500,0.93300}%
\definecolor{mycolor7}{rgb}{0.63500,0.07800,0.18400}%
\pgfplotsset{compat=newest}
\pgfplotsset{plot coordinates/math parser=false}
\pgfplotsset{compat=newest}
\def\bbn{\mathbb N}
\def\bbz{\mathbb Z}
\def\bbr{\mathbb R}
\def\bbp{\mathbb P}
\def\calu{\mathcal U}
\newcommand{\norm}[1]{\left\lVert {#1} \right\rVert}
\title{\LARGE \bf
Regret and Conservatism of Distributionally Robust Constrained Stochastic Model Predictive Control
}
\author{Maik Pfefferkorn, Venkatraman Renganathan, and Rolf Findeisen
\thanks{This project has received funding from the European Research Council (ERC) under the European Union’s Horizon 2020 research and innovation program under grant agreement No 834142 (Scalable Control) and from the German Research Foundation (Research Training Group 2297). M. Pfefferkorn and R. Findeisen are with the Control and Cyber-Physical Systems Laboratory, Technical University of Darmstadt, Germany. M. Pfefferkorn is further with the Institute of Mathematical Optimization, Otto-von-Guericke University Magdeburg, Germany. V. Renganathan is with the Department of Automatic Control - LTH, Lund University, Sweden and is a member of the ELLIIT Strategic Research Area in Lund University. Emails: venkatraman.renganathan@control.lth.se, maik.pfefferkorn@iat.tu-darmstadt.de, rolf.findeisen@iat.tu-darmstadt.de}%
}
\newtheorem{theorem}{Theorem}
\newtheorem{lemma}{Lemma}
\newtheorem{proposition}{Proposition}
\newtheorem{definition}{Definition}
\newtheorem{problem}{Problem}
\newtheorem{assumption}{Assumption}
\newtheorem{corollary}{Corollary}
\definecolor{steelblue}{RGB}{70,130,180}
\definecolor{mycolor1}{rgb}{0.00000,0.44700,0.74100}%
\definecolor{mycolor2}{rgb}{0.85000,0.32500,0.09800}%
\definecolor{mycolor3}{rgb}{0.92900,0.69400,0.12500}%
\definecolor{mycolor4}{rgb}{0.49400,0.18400,0.55600}%
\definecolor{mycolor5}{rgb}{0.46600,0.67400,0.18800}%
\definecolor{mycolor6}{rgb}{0.30100,0.74500,0.93300}%
\definecolor{mycolor7}{rgb}{0.63500,0.07800,0.18400}%
\pgfplotsset{compat=newest}
\pgfplotsset{plot coordinates/math parser=false}
\pgfplotsset{compat=newest}
\title{Regret and Conservatism of Distributionally Robust Constrained Stochastic Model Predictive Control}
\author{\name Maik Pfefferkorn \email maik.pfefferkorn@iat.tu-darmstadt.de \\
      \addr Control and Cyber-Physical Systems Laboratory \\ Technical University of Darmstadt, Germany \\  \\ \addr Intitute of Mathematical Optimization \\ Otto-von-Guericke University Magdeburg, Germany
      \AND
      \name Venkatraman Renganathan \email venkatraman.renganathan@control.lth.se \\
      \addr Department of Automatic Control - LTH \\ Lund University, Sweden
      \AND
      \name Rolf Findeisen \email rolf.findeisen@iat.tu-darmstadt.de \\
      \addr Control and Cyber-Physical Systems Laboratory \\ Technical University of Darmstadt, Germany
      }
\begin{document}

\maketitle
\thispagestyle{empty}
\pagestyle{empty}

\begin{abstract}

We analyse the conservatism and regret of distributionally robust (DR) stochastic model predictive control (SMPC) when using moment-based ambiguity sets for modeling unknown uncertainties. 
To quantify the conservatism, we compare the deterministic constraint tightening while taking a DR approach against the optimal tightening when the exact distributions of the stochastic uncertainties are known. 
Furthermore, we quantify the sub-optimality gap and regret by comparing the performance when the distributions of the stochastic uncertainties are known and unknown.
Analysing the accumulated sub-optimality of SMPC due to the lack of knowledge about the true distributions of the uncertainties marks the novel contribution of this work.

\end{abstract}

\section{Introduction}\label{sec_intro}

Recently, there has been a surge of interest in analyzing control algorithms that operate subject to unknown \emph{quantities of interest} (QIs) through the lens of regret analysis.
Lack of knowledge about a QI induces regret for the controller, i.e., a performance loss compared to when the QI was known.
Notably, many robust control algorithms incorporate caution mechanisms into their decision making scheme in order to account for the lack of knowledge about the QIs. 
The cautiousness in the decision making can be quantified as the conservatism of the control algorithm which in turn incurs regret. 
For instance, the regret of $H_{\infty}$ control compared against an optimal controller that knows the disturbance sequences exactly beforehand was examined in \cite{karapetyan2022regret}. 
Further works investigated regret in different variants of the finite- and infinite-horizon linear-quadratic regulator problem, see e.g. \cite{Li2019, pmlr-v211-chen23a} and references therein.

In this research, we are interested in comparing controllers that are robust with respect to a moment-based ambiguity set of distributions against a fully informed counterpart that knows the true uncertainty distribution (i.e. QI). 
Specifically, we study the conservatism and regret of the DR SMPC formulation for unknown uncertainty distributions compared against its fully informed counterpart. There have been recent works on DR MPC \cite{marta_smpc, rdm_tac} and more specifically on investigating its regret \cite{yan2023distributionally}. However, they all consider Wasserstein-based formulations. On the contrary, we consider a moment-based ambiguity set formulation. 

Our main contributions are: 1) We define \textsl{constraint conservatism, sub-optimality gap} and \textsl{distributional regret} of DR SMPC with uncertainties modeled using moment-based ambiguity sets, and 2) We develop a framework for quantifying conservatism and for analyzing the resulting regret behavior. We derive analytic expressions to identify and analyze the effects that lead to regret.

The paper is organised as follows: The DR SMPC problem is introduced in Section \ref{sec_problem_formulation}. 
Regret and conservatism associated with DR SMPC are presented in Section \ref{sec_regret}, while some of its features are demonstrated using numerical simulation in Section \ref{sec_num_sim} before concluding in Section \ref{sec_conclusion}. 


\noindent \textbf{Notations:} 
The set of real numbers, integers and the natural numbers are denoted by $\bbr, \bbz, \bbn$ respectively and the subset of real numbers greater than a given constant $a \in \bbr$ is denoted by $\bbr_{> a}$. 
The subset of natural numbers between two constants $a, b \in \bbn$ with $a < b$ is denoted by $[a:b]$. 
For $N \in \mathbb{N}$, we denote by $\llbracket N \rrbracket := \{1,\dots,N \}$.
Given two sets $A \subset \mathbb{R}^{n}, B \subset \mathbb{R}^{n}$, their 
Pontryagin difference is denoted by $A \ominus B := \{ a \in A \mid a + b \in A, \forall b \in B \} \subset \mathbb{R}^{n}$. 
For a matrix $A \in \bbr^{n \times n}$, we denote its transpose and its trace by $A^{\top}$ and $\mathbf{Tr}(A)$ respectively. 
Given $x \in \bbr^{n}$, and $A \in \bbr^{n \times n}$, the notation ${\left \| x \right \Vert}^{2}_{A}$ denotes $x^{\top} A x$. 

\section{DR SMPC Problem Formulation}
\label{sec_problem_formulation}

We consider DR SMPC problem with joint chance constraints similar to \cite{paulson2020stochastic} using moment-based ambiguity set modeling to describe the stochastic system uncertainties.

\subsection{System Dynamics \& Constraints}
We consider a stochastic linear time-invariant system   
\begin{equation} \label{eqn_system_dynamics}
x_{k+1}= Ax_k + Bu_k + w_k, \quad \forall k \in \mathbb{N},
\end{equation}
where $x_{k} \in \bbr^n$ and $u_{k} \in \bbr^m$ are the system state and input at time step $k$, respectively. The matrices $A \in \mathbb{R}^{n \times n}$ and $B \in \mathbb{R}^{n \times m}$ denote the dynamics matrix and the input matrix respectively. For the ease of exposition, we will assume that $x_{0} \in \mathbb{R}^{n}$ is known. The process noise $w_k \in \bbr^{n}$ is a zero-mean random vector that is independent and identically distributed over time. 
The distribution of $w_k$, namely $\mathbb{P}^{w}$, is unknown.
However, it is known to belong to a moment-based ambiguity set of distributions given by 
\begin{equation} \label{eqn_w_ambig_set}
\mathcal{P}^w = \left\{ \mathbb{P}^{w} \mid \mathbb{E}[w_k]=0, \mathbb{E}[w_k w_k^{\top}] = {\Sigma}_{w} \right\}.
\end{equation}
We denote by $N \in \bbn$ the prediction horizon of the predictive control problem to be defined shortly.
The states, inputs and the disturbances over the prediction horizon are given by
\begin{align}
\mathbf{x}_k &:= \begin{bmatrix} x^{\top}_{0 \mid k} & x^{\top}_{1 \mid k} & \cdots & x^{\top}_{N \mid k} \end{bmatrix}^{\top} \in \mathbb{R}^{(N+1)n}, \\
\mathbf{u}_k &:= \begin{bmatrix} u^{\top}_{0 \mid k} & u^{\top}_{1 \mid k} & \cdots & u^{\top}_{N-1 \mid k} \end{bmatrix}^{\top}  \in \mathbb{R}^{Nm},
\end{align}
\begin{align}
\mathbf{w}_k &:= \begin{bmatrix} w^{\top}_{0 \mid k} & w^{\top}_{1 \mid k} & \cdots & w^{\top}_{N-1 \mid k} \end{bmatrix}^{\top}  \in \mathbb{R}^{Nn}.
\end{align}
Here, the subscript $i \! \mid \! k$ indicates $i$ time steps ahead of $k$ and $x_{0 \mid k} = x_k$.
Starting at $x_{k}$, we compactly write the evolution of \eqref{eqn_system_dynamics} over the prediction horizon as
\begin{align} \label{eqn_compact_system_model}
    \mathbf{x}_k = \mathbf{A} x_{0 \mid k} + \mathbf{B} \mathbf{u}_k + \mathbf{D} \mathbf{w}_k,
\end{align}
with matrices $\mathbf{A}, \mathbf{B}$, and $\mathbf{D}$ being of appropriate dimension. 
Note that $\mathbf{x}_k$ and $\mathbf{w}_k$ are random vectors as the realizations of future disturbances $w_{i \mid k}, i \in [0:N-1]$ are a-priori unknown.
The mean and covariance of \eqref{eqn_compact_system_model} evolve as
\begin{align}
    \bar{\mathbf{x}}_k &:= \mathbb{E}[\mathbf{x}_k] =  \mathbf{A} x_{0 \mid k} + \mathbf{B} \mathbf{u}_k, \label{eqn_mean_dynamics} \\
    \mathbf{\Sigma}_{\mathbf{x}} &:= \mathbb{E}[(\mathbf{x}_k - \bar{\mathbf{x}}_k)(\mathbf{x}_k - \bar{\mathbf{x}}_k)^{\top}] = \mathbf{D} \mathbf{\Sigma}_{\mathbf{w}} \mathbf{D}^{\top}, \label{eqn_covariance_dynamics}
\end{align}
where $\mathbf{\Sigma}_{\mathbf{w}}$ is a block diagonal matrix with $N$ blocks and each block being equal to $\Sigma_{w}$.
Note that the state and disturbance sequences $(x_i)_{i \in [0:k]}$ and $(w_i)_{i \in [0:k-1]}$ have been realized at time step $k$ and are hence deterministic.
We denote those realizations by $(\hat{x}_i)_{i \in [0:k]}$ and $(\hat{w}_i)_{i \in [0:k-1]}$, respectively, where $\hat{x}_0 = x_0$ is the deterministic initial condition.
Since the support of $\mathbb{P}^{w}$ can be unbounded, we cannot guarantee the satisfaction of hard state constraints for all $k$. Hence, we enforce a DR joint risk constraint on the states as 
\begin{align} \label{eqn_risk_constraints}
    \sup_{\bbp_k^{\mathbf{x}} \in \mathcal{P}^{\mathbf{x}}_k } \bbp_k^{\mathbf{x}} \left[ \mathbf{x}_k \notin \mathcal{X} \right] \leq \Delta,
\end{align}
where $\mathcal{P}^{\mathbf{x}}_k$ denotes the moment-based ambiguity set for the compact state $\mathbf{x}_k$ and is given by
\begin{align*}
    \mathcal{P}^{\mathbf{x}}_k = \left\{ \mathbb{P}^{\mathbf{x}}_k \mid \mathbb{E}[\mathbf{x}_k] = \bar{\mathbf{x}}_k, \mathbb{E}[(\mathbf{x}_k \! - \! \bar{\mathbf{x}}_k) (\mathbf{x}_k - \bar{\mathbf{x}}_k)^{\top}] = \mathbf{\Sigma}_{\mathbf{x}} \right\}.
\end{align*}
The set $\mathcal{X}$ is assumed to be a convex polytope defined by  
\begin{align}
\mathcal{X} &:= \bigcap^{n_{x}}_{i = 1} \left\{ \mathbf{x} \mid f^{\top}_{i} \mathbf{x} \leq g_{i} \right\} = \left\{ \mathbf{x} \mid \mathbf{F} \mathbf{x} \leq \mathbf{g} \right\},   
\end{align}
where $f_{i} \in \mathbb{R}^{(N+1)n}, g_{i} \in \mathbb{R}, \mathbf{F} \in \mathbb{R}^{n_{x} \times (N+1)n}$ and $\mathbf{g} \in \mathbb{R}^{n_{x}}$. 
Further, $\Delta \in (0,0.5]$ represents a user-prescribed total risk budget for the worst-case probability of constraint violation over the entire prediction horizon. 
A similar probabilistic treatment of control constraints can also be formulated. For the ease of exposition, we consider the hard input constraint formulation $u_k \in \mathcal{U}, \forall k$, where $\mathcal{U}$ is a convex polytope
\begin{align}
\mathcal{U} &:= \bigcap^{n_{u}}_{j = 1} \left\{ \mathbf{u} \mid c^{\top}_{j} \mathbf{u} \leq d_{j} \right\} = \left\{ \mathbf{u} \mid \mathbf{C} \mathbf{u} \leq \mathbf{d} \right\} \label{eqn_controlconstraint_polytope}    
\end{align}
with $c_{j} \in \mathbb{R}^{Nm}, d_{i} \in \mathbb{R}, \mathbf{C} \in \mathbb{R}^{n_{u} \times Nm}$ and $\mathbf{d} \in \mathbb{R}^{n_{u}}$.
We employ SMPC as a solution approach to minimizing the cost
\begin{equation}
    J_\infty(\mathbf{x}_\infty, \mathbf{u}_\infty) = \sum_{k=0}^\infty \lVert x_k \rVert_Q^2 + \lVert u_k \rVert_R^2,
\end{equation}
where $\mathbf{x}_\infty = (x_i)_{i \in \bbn}$, $\mathbf{u}_\infty = (u_i)_{i \in \bbn}$, $Q \in \bbr^{n \times n}, Q \succeq 0$ and $R \in \bbn^{m \times m}, R \succ 0$ and which is an intractable problem due to the infinite horizon as well as the stochastic dynamics.
The objective of the SMPC is to minimize the cost
\begin{align}\label{eq:smpc_cost_orig}
J_{\mathrm{SMPC}}(\mathbf{u}_k, x_{0 \mid k}) := \mathbb{E} \left[ \norm{\mathbf{x}_k}^{2}_{\mathbf{Q}} + \norm{\mathbf{u}_k}^{2}_{\mathbf{R}}\right]    
\end{align}
defined for a finite horizon $N$ and where the expectation $\mathbb{E}[ \cdot ]$ accounts for the stochasticity of the states.
Note that $\mathbf{Q}$ is a block-diagonal matrix with $N+1$ blocks of which the first $N$ are equal to $Q \succeq 0$ and the last block is given by $Q_\mathrm{f} \succeq 0$. 
For appropriately designed $Q_\mathrm{f}$, \eqref{eq:smpc_cost_orig} approximates the expected, remaining infinite-horizon cost starting at $x_k$ at time point $k$.
Similarly, $\mathbf{R}$ is a block-diagonal matrix with $N$ blocks and each block being equal to $R \succ 0$.
Then, $J_{\mathrm{SMPC}}(\mathbf{u}_k, x_{0 \mid k})$ can be rewritten as
\begin{align} \label{eqn_cost_reform1}
    J_{\mathrm{SMPC}}(\mathbf{u}_k, x_{0 \mid k}) = \norm{\bar{\mathbf{x}}_k}^{2}_{\mathbf{Q}} + \norm{\mathbf{u}_k}^{2}_{\mathbf{R}} + \mathbf{Tr}(\mathbf{Q} \mathbf{\Sigma}_{\mathbf{x}}).
\end{align}
We now formally state the DR SMPC optimization problem along with the state and input constraints. 

\subsection{The DR SMPC Problem}

\begin{problem} \label{smpc_problem}
Given an initial state $x_{0 | k} = x_k \in \mathbb{R}^{n}$, risk budget $\Delta$, process noise $w_{i \mid k} \sim \mathbb{P}^{w} \in \mathcal{P}^{w}, \forall i \in [0 : N-1]$, and the penalty matrices $Q \succeq 0, R \succ 0$, we seek to find an input sequence $\mathbf{u}_k$ that optimizes the following optimal control problem (OCP): 
\begin{subequations} \label{eqn_smpc_optimization_problem}
\begin{alignat}{2}
&\! \underset{\mathbf{u}_k}{\mathrm{min}} &\qquad& J_{\mathrm{SMPC}}(\mathbf{u}_k, x_{0 \mid k}) \label{eq:optProb}\\
&\mathrm{s.} \, \mathrm{t.} &      & \mathbf{x}_k = \mathbf{A} x_{0 \mid k} + \mathbf{B} \mathbf{u}_k + \mathbf{D} \mathbf{w}_k, \quad x_{0 \mid k} = x_k, \label{eq:constraint1}\\
&                  &      & \mathbf{u}_k \in \calu, w_{i \mid k} \sim \mathbb{P}^{w} \in \mathcal{P}^w, \label{eq:constraint2} \\
&                  &      & \sup_{\bbp_k^{\mathbf{x}} \in \mathcal{P}^{\mathbf{x}}_k } \bbp_k^{\mathbf{x}} \left[ \mathbf{x}_k \notin \mathcal{X} \right] \leq \Delta. \label{cc_joint}
\end{alignat}
\end{subequations}
\end{problem}
Problem \ref{smpc_problem} is solved online in a receding horizon fashion at each time instance $k$.
The first element $u^{\star}_{0 \mid k}$ of the optimal input sequence $\mathbf{u}^{\star}_{k}$ is applied to the system given by \eqref{eqn_system_dynamics} until the next sampling instant $k+1$\footnote{We refer the reader to \cite{Rawlings2019} for a comprehensive introduction to and overview of model predictive control.}. 
Note that the OCP \eqref{eqn_smpc_optimization_problem} is in general computationally intractable as \eqref{cc_joint} is an infinite-dimensional DR joint risk constraint. Using Boole's inequality, \eqref{cc_joint} can be decomposed into individual risk constraints across each time step along the prediction horizon. Suppose that $\sum^{n_{x}}_{i=1} \delta_{i} \leq \Delta$ and
\begin{align} 
 \sup_{\bbp_k^{\mathbf{x}} \in \mathcal{P}^{\mathbf{x}}_k } \bbp_k^{\mathbf{x}} \left[ f^{\top}_{i} \mathbf{x}_k > g_{i} \right] &\leq \delta_{i}, \quad \forall i \in \llbracket n_{x} \rrbracket. 
 \label{eqn_individual_dr_constraint}
\end{align}
Then, \eqref{eqn_individual_dr_constraint} implies \eqref{cc_joint}.%
\inArxiv{
That is, 
\begin{align*}
\sup_{\bbp_k^{\mathbf{x}} \in \mathcal{P}^{\mathbf{x}}_k } \bbp_k^{\mathbf{x}} \! \left[ \mathbf{x}_k \! \notin \! \mathcal{X} \right] 
&= \sup_{\bbp_k^{\mathbf{x}} \in \mathcal{P}^{\mathbf{x}}_k } \bbp_k^{\mathbf{x}} \left[ \mathbf{x}_k \! \in \! \bigcup^{n_{x}}_{i = 1} \left\{ \mathbf{x}_k \! \mid \! f^{\top}_{i} \mathbf{x}_k \! > \! g_{i} \right\} \right]   \\
&\leq \sum^{n_{x}}_{i = 1} \sup_{\bbp_k^{\mathbf{x}} \in \mathcal{P}^{\mathbf{x}}_k } \bbp_k^{\mathbf{x}} \left[ f^{\top}_{i} \mathbf{x}_k > g_{i} \right] \\
&\leq \sum^{n_{x}}_{i=1} \delta_{i} \\
&\leq \Delta.
\end{align*}
}%
Note that the decomposition using Boole's inequality is known to be conservative, see \cite{hunter_book} for less conservative alternatives.
Though \eqref{eqn_individual_dr_constraint} is an infinite-dimensional risk constraint, it can be equivalently formulated as a second order cone constraint on $\bar{\mathbf{x}}_{k}$ through deterministic constraint tightening (See Lemma \ref{lemma_tightening} given without proof).

\begin{lemma} \label{lemma_tightening}
(From \cite{ono2008iterative} and \cite{ElGhaoui_DR_LP}) Let $\mathbf{x} \in \mathbb{R}^{n}$ be a random vector with known mean $\bar{\mathbf{x}}$ and covariance $\mathbf{\Sigma}_{\mathbf{x}}$. Then, $\forall \delta_{i} \! \in \! (0, 0.5], i \in \llbracket n_x \rrbracket$, we have
\begin{equation}
\footnotesize
\bbp^{\mathbf{x}} \left[ f^{\top}_{i} \mathbf{x} > g_{i} \right] 
\leq \delta_{i} \nonumber \Leftrightarrow f^{\top}_{i} \bar{\mathbf{x}} \leq g_{i} - \psi_{i} \norm{\mathbf{\Sigma}_{\mathbf{x}}^{\frac{1}{2}}f_{i}}_{2},
\end{equation}
where the tightening constant $\psi_{i}, \forall i \in \llbracket n_{x} \rrbracket$ is given by
\begin{align} \label{eqn_tightening_constant}
    \psi_{i} 
    := \begin{cases}
    \Phi^{-1}_{\mathbb{P}^{\mathbf{x}}}(1-\delta_{i}), &\text{ when $\mathbb{P}^{\mathbf{x}}$ is known}, \\
    \sqrt{ \frac{1-\delta_{i}}{\delta_{i}}}, &\text{ when $\mathbb{P}^{\mathbf{x}}$ is unknown}, 
    \end{cases}
\end{align}
and $\Phi_{\mathbb{P}^{\mathbf{x}}}$ denotes the Cumulative Distribution Function of the known distribution $(\mathbb{P}^{\mathbf{x}})$ when normalized.
\end{lemma}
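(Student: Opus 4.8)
The plan is to reduce the $n$-dimensional chance constraint to a one-dimensional statement about the scalar projection $y := f_i^\top \mathbf{x}$, and then treat the two cases in \eqref{eqn_tightening_constant} separately. Since $\mathbf{x}$ has mean $\bar{\mathbf{x}}$ and covariance $\mathbf{\Sigma}_{\mathbf{x}}$, the scalar $y$ has mean $\mu_i := f_i^\top \bar{\mathbf{x}}$ and variance $\sigma_i^2 := f_i^\top \mathbf{\Sigma}_{\mathbf{x}} f_i = \norm{\mathbf{\Sigma}_{\mathbf{x}}^{1/2} f_i}_2^2$, and the event $\{f_i^\top \mathbf{x} > g_i\}$ is exactly $\{y > g_i\}$. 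So throughout I work with the tail probability $\mathbb{P}[y > g_i]$. Note first that because $\delta_i \in (0,0.5]$ the constant $\psi_i$ is nonnegative in both cases, so the tightened inequality $\mu_i \leq g_i - \psi_i \sigma_i$ already forces $g_i - \mu_i \geq 0$; I will use this to place the threshold at or above the mean in what follows.

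For the known case, $\psi_i = \Phi^{-1}_{\mathbb{P}^{\mathbf{x}}}(1-\delta_i)$, I would standardize: writing $z := (y - \mu_i)/\sigma_i$, the variable $z$ has the normalized distribution whose CDF is $\Phi_{\mathbb{P}^{\mathbf{x}}}$. (This normalization is well defined and independent of the direction $f_i$ whenever $\mathbf{x}$ is, e.g., Gaussian or more generally elliptically distributed, which is the setting in which $\Phi_{\mathbb{P}^{\mathbf{x}}}$ is a single, fixed CDF.) Then $\mathbb{P}[y > g_i] = 1 - \Phi_{\mathbb{P}^{\mathbf{x}}}((g_i - \mu_i)/\sigma_i)$, so $\mathbb{P}[y>g_i] \leq \delta_i$ becomes $\Phi_{\mathbb{P}^{\mathbf{x}}}((g_i-\mu_i)/\sigma_i) \geq 1-\delta_i$. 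Using that $\Phi_{\mathbb{P}^{\mathbf{x}}}$ is nondecreasing with inverse $\Phi^{-1}_{\mathbb{P}^{\mathbf{x}}}$, this is equivalent to $(g_i - \mu_i)/\sigma_i \geq \Phi^{-1}_{\mathbb{P}^{\mathbf{x}}}(1-\delta_i) = \psi_i$, i.e. $\mu_i \leq g_i - \psi_i \sigma_i$, which is the claimed second-order-cone condition; both directions of the equivalence follow immediately from the monotonicity of $\Phi^{-1}_{\mathbb{P}^{\mathbf{x}}}$.

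For the unknown case the probability must be controlled uniformly over the ambiguity set $\mathcal{P}^{\mathbf{x}}_k$, i.e. I need $\sup \mathbb{P}[y > g_i] \leq \delta_i$ where the supremum ranges over all distributions with mean $\mu_i$ and variance $\sigma_i^2$. The central ingredient is the tight one-sided Chebyshev (Cantelli) inequality, which states that for $g_i \geq \mu_i$,
\[
\sup_{\mathbb{P}:\, \mathbb{E}[y]=\mu_i,\ \mathrm{Var}(y)=\sigma_i^2} \mathbb{P}[y > g_i] = \frac{\sigma_i^2}{\sigma_i^2 + (g_i - \mu_i)^2},
\]
with the supremum attained by an explicit two-point distribution. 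Substituting this into $\sup \mathbb{P}[y>g_i] \leq \delta_i$ and rearranging gives $(1-\delta_i)\sigma_i^2 \leq \delta_i (g_i - \mu_i)^2$; taking the nonnegative square root (legitimate since $g_i - \mu_i \geq 0$) yields $g_i - \mu_i \geq \sqrt{(1-\delta_i)/\delta_i}\,\sigma_i$, i.e. $\mu_i \leq g_i - \psi_i \sigma_i$ with $\psi_i = \sqrt{(1-\delta_i)/\delta_i}$, as claimed.

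The main obstacle is establishing the \emph{tightness} of the Cantelli bound in the unknown case, i.e. showing the supremum is exactly $\sigma_i^2/(\sigma_i^2+(g_i-\mu_i)^2)$ rather than merely bounded by it; this requires exhibiting the worst-case two-point distribution that saturates the inequality, which is precisely what converts the $\Leftarrow$ direction (sufficiency of the tightening) into the $\Rightarrow$ direction (its necessity) and hence yields the equivalence. A secondary technical point is ensuring the standardized CDF $\Phi_{\mathbb{P}^{\mathbf{x}}}$ in the known case is genuinely direction-independent, which I would handle by restricting to the elliptical/Gaussian setting in which the lemma is typically invoked.
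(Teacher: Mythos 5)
The paper states Lemma \ref{lemma_tightening} explicitly \emph{without proof}, citing \cite{ono2008iterative} and \cite{ElGhaoui_DR_LP}, and your argument is precisely the standard one in those references: reduce to the scalar $y = f_i^{\top}\mathbf{x}$ with mean $f_i^{\top}\bar{\mathbf{x}}$ and variance $\|\mathbf{\Sigma}_{\mathbf{x}}^{1/2}f_i\|_2^2$, invert the normalized CDF in the known-distribution case, and invoke the \emph{tight} one-sided Chebyshev (Cantelli) bound, saturated by a two-point distribution, in the distributionally robust case. Your proposal is correct, and it even makes explicit two points the lemma's statement glosses over: that the equivalence (rather than mere sufficiency) in the unknown case requires tightness of Cantelli's inequality, and that a single direction-independent normalized CDF $\Phi_{\mathbb{P}^{\mathbf{x}}}$ exists only in, e.g., the Gaussian or elliptical setting.
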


The surrogate of problem \ref{smpc_problem}, called the surrogate SMPC, can then be written as
\begin{subequations} \label{eqn_smpc_optimization_problem_surrogate}
\begin{alignat}{2}
&\! \underset{\mathbf{u}_k}{\mathrm{min}} &\qquad& J_{\mathrm{SMPC}}(\mathbf{u}_k, x_{0 \mid k}) \label{eq:optProb_surrogate}\\
&\mathrm{s.} \, \mathrm{t.} &      & \mathbf{x}_k = \mathbf{A} x_{0 \mid k} + \mathbf{B} \mathbf{u}_k + \mathbf{D} \mathbf{w}_k, \quad x_{0 \mid k} = x_k, \label{eq:constraint1_surrogate}\\
&                  &      & \mathbf{u}_k \in \calu, w_{k} \sim \mathbb{P}^{w} \in \mathcal{P}^w, \label{eq:constraint2_surrogate} \\
&                  &      & f^{\top}_{i} \bar{\mathbf{x}}_k \leq g_{i} - \psi_{i} \norm{\mathbf{\Sigma}_{\mathbf{x}}^{\frac{1}{2}}f_{i}}_{2}, i \in \llbracket n_{x} \rrbracket.
\label{cc_joint_surrogate} 
\end{alignat}
\end{subequations}
Note that \eqref{eqn_smpc_optimization_problem_surrogate} has finite-dimensional constraints unlike its original counterpart in \eqref{eqn_smpc_optimization_problem}. 
Let the control input sequences that minimize \eqref{eqn_smpc_optimization_problem_surrogate} with exact and 
DR constraint tightening according to \eqref{eqn_tightening_constant}
be denoted by $\mathbf{u}^{\star}_{k}$ and $\mathbf{u}^{\dagger}_{k}$ respectively.
Their corresponding optimal value functions are given by
\begin{align}\label{eq:optimal_value_functions}
    J^{\diamond}_{\mathrm{SMPC}}(x_k) &:= J_{\mathrm{SMPC}}(\mathbf{u}^{\diamond}_{k}, x_k), \quad \text{where~} \diamond \in \{\dagger, \star \}.
\end{align}

\inConf{
\begin{definition}(From \cite{Culbertson2023})\label{def:ISSp}
    System \eqref{eqn_system_dynamics} is input-to-state stable in probability (ISSp) with respect to $L^p$ if $ \forall \epsilon \! \in \! (0,1)$, $K \! \in \! \bbn$ and $w_k \! \in \! L^p$, $\exists \beta \! \in \! \mathcal{KL}$ and $\exists \varrho \! \in \! \mathcal{K}$ such that\footnote{See appendix of \cite{Pfefferkorn2023} for the definitions of $L^p$ spaces for random vectors and function classes $\mathcal{KL}$ and $\mathcal{K}$.}
    \begin{equation*}
        \mathbb{P}_x \left [ \lVert x_{k+i} \rVert \leq \beta( \lVert x_{k} \rVert, i) \! + \! \varrho( \lVert w_{k+i} \rVert_{L^p}), \forall i \! \leq \! K \right ] \! \geq \! 1 \! - \! \epsilon.
    \end{equation*}
    If this holds for $\beta( \lVert x_k \rVert, i) = M \nu^{i} \lVert x_k \rVert$ with $M>0$ and $\nu \in (0,1)$, system \eqref{eqn_system_dynamics} is exponentially ISSp (eISSp).
\end{definition}}

\inArxiv{
\begin{definition}(From \cite{Culbertson2023})\label{def:ISSp}
    System \eqref{eqn_system_dynamics} is input-to-state stable in probability (ISSp) with respect to $L^p$ if $ \forall \epsilon \! \in \! (0,1)$, $K \! \in \! \bbn$ and $w_k \! \in \! L^p$, $\exists \beta \! \in \! \mathcal{KL}$ and $\exists \varrho \! \in \! \mathcal{K}$ such that\footnote{See appendix for the definitions of $L^p$ spaces for random vectors and function classes $\mathcal{KL}$ and $\mathcal{K}$.}
    \begin{equation*}
        \mathbb{P}_x \left [ \lVert x_{k+i} \rVert \leq \beta( \lVert x_{k} \rVert, i) \! + \! \varrho( \lVert w_{k+i} \rVert_{L^p}), \forall i \! \leq \! K \right ] \! \geq \! 1 \! - \! \epsilon.
    \end{equation*}
    If this holds for $\beta( \lVert x_k \rVert, i) = M \nu^{i} \lVert x_k \rVert$ with $M>0$ and $\nu \in (0,1)$, system \eqref{eqn_system_dynamics} is exponentially ISSp (eISSp).
\end{definition}}

Similar to ISS in the case of bounded disturbances, the ISSp property can be formulated in terms of a Lyapunov function as follows.

\begin{definition}(From \cite{Culbertson2023})
    A continuous function $V: \bbr^n \rightarrow \bbr_{\geq 0}$ is an ISSp Lyapunov function for system \eqref{eqn_system_dynamics} if there exist functions $\kappa_1, \kappa_2, \kappa_3 \! \in \! \mathcal{K}_\infty$ and $\kappa_4 \! \in \! \mathcal{K}$ such that
    \begin{subequations}
    \begin{align*}
            & \kappa_1( \lVert x_k \rVert ) \leq V(x_k) \leq \kappa_2( \lVert x_k \rVert ) \\
            & \mathbb{E}[ V(f(x_k, u_k) \! - \! V(x_k) ] \leq -\kappa_3( V( x_k ) ) \! + \! \kappa_4( \lVert w_k \rVert_{L^p})
    \end{align*}
    \end{subequations}
    hold for all $x_k \in \mathcal{X}$ and $w_k \in L^p$.
    If $\kappa_1(\lVert x_k \rVert) = a \lVert x_k \rVert^c$, $\kappa_2(\lVert x_k \rVert) = b \lVert x_k \rVert^c$ and $\kappa_3(V(x)) = \nu V(x)$ with $a,b,c \in \bbr_{>0}$ and $\nu \in (0,1)$, then $V$ is an eISSp Lyapunov function.
\end{definition}

\begin{assumption}\label{as:ISSp}
    System \eqref{eqn_system_dynamics} is exponentially input-to-state stable in probability (eISSp) under DR SMPC for both the fully informed and the DR cases. That is, $\exists \beta^{\star}, \beta^\dagger \in \mathcal{KL}$, given by $\beta^{\star}(s, t) = M^{\star} \nu^{\star^t} s, \beta^\dagger(s, t) = M^\dagger \nu^{\dagger^t} s$ with $M^{\star},M^\dagger>0$ and $\nu^{\star},\nu^\dagger \in (0,1)$ such that $\forall i \in \llbracket K^\diamond \rrbracket, K^\diamond \in \bbn$
    \begin{equation}\label{eq:eISSp_by_assumption}
        \mathbb{P}_x \left [ \lVert x_{k+i}^\diamond \rVert \leq \beta^\diamond( \lVert x_{k}^\diamond \rVert, i) \! + \! \varrho( \lVert w_{k+i} \rVert_{L^p}) \right ] \! \geq \! 1 \! - \! \varepsilon
    \end{equation}
    \color{black}
    holds for $w_k \in L^p, w_k \sim \mathbb{P}^w$ for some $p > 0$ and $\varrho \in \mathcal{K}$, given $\varepsilon \! \in \! (0,1)$. The optimal value functions $J_{\mathrm{SMPC}}^{\star}(x_k), J_{\mathrm{SMPC}}^{\dagger}(x_k)$ are eISSp Lyapunov functions, i.e., $\mathbb{E}[J^\diamond_\mathrm{SMPC}(x_{k+1}) - J^\diamond_\mathrm{SMPC}(x_k) \mid x_k] \leq - \omega^\diamond J_\mathrm{SMPC}^\diamond(x_k) + \rho(\|w_k\|_{L^p})$ for $\omega^\diamond \in (0,1)$ and $\rho \in \mathcal{K}$.
\end{assumption}

\begin{corollary}\label{corollary:nominal_stability}(From \cite{Culbertson2023})
    The nominal (undisturbed) system is asymptotically stable according to
    \begin{equation}\label{eq:nominal_stability}
        \mathbb{P}[ \lVert \bar{x}_{k+i} \rVert \leq \beta(\lVert x_k \rVert, i), \forall i \leq K)] = 1.
    \end{equation}
\end{corollary}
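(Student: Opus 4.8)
The plan is to obtain the nominal statement \eqref{eq:nominal_stability} as a direct specialization of the eISSp bound \eqref{eq:eISSp_by_assumption} in Assumption~\ref{as:ISSp}, exploiting the fact that the undisturbed trajectory carries no randomness. First I would pass to the nominal (zero-disturbance) system by setting $w_{k+i} = 0$ for all $i$, so that the dynamics \eqref{eqn_system_dynamics} collapse onto the mean dynamics \eqref{eqn_mean_dynamics}. Consequently the realized state coincides with its mean, $x_{k+i}^\diamond = \bar{x}_{k+i}$, and the entire trajectory emanating from the given state $x_k$ becomes deterministic. Since the argument below does not depend on whether $\diamond = \star$ or $\diamond = \dagger$, the resulting bound holds for the generic $\beta$ appearing in the corollary.

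Next I would invoke \eqref{eq:eISSp_by_assumption} in this limit. Because $\varrho \in \mathcal{K}$ and every $\mathcal{K}$-class function vanishes at the origin, we have $\varrho(\lVert w_{k+i} \rVert_{L^p}) = \varrho(0) = 0$, so the eISSp inequality reduces to
\begin{equation*}
\mathbb{P}\left[ \lVert \bar{x}_{k+i} \rVert \leq \beta(\lVert x_k \rVert, i), \ \forall i \leq K \right] \geq 1 - \varepsilon .
\end{equation*}

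The decisive observation is that the event inside this probability depends only on the deterministic quantities $\bar{x}_{k+i}$ and on the conditioning state $x_k$; it is therefore a deterministic event whose probability can only take the values $0$ or $1$. Since the lower bound $1 - \varepsilon$ is strictly positive for the given $\varepsilon \in (0,1)$, the probability cannot be $0$ and must equal $1$, which is precisely \eqref{eq:nominal_stability}.

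The main obstacle is conceptual rather than computational: one must argue carefully that restricting to the undisturbed system legitimately removes all stochasticity, so that the eISSp bound — stated for random trajectories — specializes correctly to a statement about $\bar{x}_{k+i}$, and that the $0/1$ dichotomy for the deterministic event, together with the positivity of $1-\varepsilon$, upgrades the high-probability guarantee to an almost-sure one. The remaining ingredients, namely $\varrho(0) = 0$ for $\varrho \in \mathcal{K}$ and the identity $x_{k+i} = \bar{x}_{k+i}$ under $w \equiv 0$, follow immediately from the definitions and require no further work.
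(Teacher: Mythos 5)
Your proof is correct, and in fact the paper offers no proof of its own here: Corollary~\ref{corollary:nominal_stability} is stated as imported from \cite{Culbertson2023}, and your argument --- set $w \equiv 0$ so the closed-loop trajectory is deterministic and coincides with $\bar{x}_{k+i}$, use $\varrho(0)=0$ since $\varrho \in \mathcal{K}$, then upgrade $\mathbb{P} \geq 1-\varepsilon > 0$ to $\mathbb{P} = 1$ via the zero--one dichotomy for deterministic events --- is precisely the standard derivation behind that citation. One small point you implicitly (and correctly) rely on: the invocation at $w \equiv 0$ is licensed by Definition~\ref{def:ISSp}, which quantifies over all $w_k \in L^p$, even though the Dirac at zero does not belong to the ambiguity set $\mathcal{P}^w$; also, moving the quantifier $\forall i \leq K$ inside the probability is harmless here exactly because each individual event is deterministic and hence has probability one.
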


Note that there is an interplay between the probability $\varepsilon$ and the horizon lengths $K^{\star}, K^\dagger$.
For fixed $\varepsilon$, the (maximum) horizon lengths $K^{\star}, K^\dagger$ for which eISSP can be established in the respective case is determined by the problem formulation and vice versa.
Furthermore, lower values of $\varepsilon$ imply lower values of $K^{\star}, K^\dagger$, in general.
We consider the common horizon length $\bar{K} = \min \{ K^{\star}, K^\dagger \}$ in the following for a valid analysis in both the DR and the fully informed case.
Note that stability of MPC is usually achieved by design, which is, however, out of the scope of this work.


\section{Conservatism \& Regret Analyses} \label{sec_regret}
In this section, we define the concepts of conservatism and regret for the previously introduced DR SMPC algorithm. 

\subsection{Conservatism of DR SMPC}
We would like to study the difference in constraint tightening when the true distributions of the stochastic uncertainties are known and when they are unknown. 

\begin{definition}
The constraint conservatism, denoted by $\mathfrak{C} \in \mathbb{R}$, associated with the DR SMPC is defined as the difference in volume of the deterministically tightened state constraint set 
with and without the knowledge of $\mathbb{P}^{w}$ and $\mathbb{P}_k^{\mathbf{x}}$ respectively. That is,
\begin{align}
\mathfrak{C} 
&:= 
\int_{\underbrace{\left(\mathcal{X}_{\mathrm{True}} \ominus \mathcal{X}_{\mathrm{DR}} \right)}_{:= \mathcal{X}_{\mathrm{Diff}}}} dx, \quad \text{where}, \label{eqn_constraint_conservatism} \\
\mathcal{X}_{\mathrm{True}} 
&= 
\left\{ \mathbf{x} \mid \mathbf{F} {\mathbf{x}} \leq \mathbf{g}_{\mathrm{True}} \right\}, \\
\mathcal{X}_{\mathrm{DR}} 
&= 
\left\{ \mathbf{x} \mid \mathbf{F} {\mathbf{x}} \leq \mathbf{g}_{\mathrm{DR}} \right\}, \quad \text{where} \quad \forall i \in \llbracket n_{x} \rrbracket, \\
\begin{bmatrix} \mathbf{g}_{\mathrm{True}} \end{bmatrix}_i
&:= 
\begin{bmatrix}
g_{i} - \Phi^{-1}_{\mathbb{P}_k^{\mathbf{x}}}(1-\delta_{i}) \norm{\mathbf{\Sigma}_{\mathbf{x}}^{\frac{1}{2}}f_{i}}_{2}
\end{bmatrix}, \quad \text{and}, \label{eqn_true_constraint_tightening} \\
\begin{bmatrix} \mathbf{g}_{\mathrm{DR}} \end{bmatrix}_i
&:= 
\begin{bmatrix}
g_{i} - \sqrt{\frac{1 - \delta_{i}}{\delta_{i}}} \norm{\mathbf{\Sigma}_{\mathbf{x}}^{\frac{1}{2}}f_{i}}_{2} 
\end{bmatrix}. \label{eqn_DR_constraint_tightening}
\end{align}

\end{definition}
Let $\mathbf{z} := \left[
\begin{array}{c|c}
\mathbf{x}^\top & \mathbf{x}^\top 
\end{array}\right]^\top \in \mathbb{R}^{2n(N+1)}$ and
\begin{align}
\mathbf{h} := \left[
\begin{array}{c}
\mathbf{g}_{\mathrm{True}} \\ \hline \mathbf{g}_{\mathrm{DR}} 
\end{array}\right] \in \mathbb{R}^{2 n_{x}}, \, \mathbf{H} := \left[
\begin{array}{c|c}
\mathbf{F}  & \mathbf{F} \\ \hline
\mathbf{0}_{n_{x} \times n(N+1)} & \mathbf{F}
\end{array}\right].     
\end{align}
Then, using Theorem 3.3 in \cite{gabidullina2019minkowski}, we can express $\mathcal{X}_{\mathrm{Diff}}$ as
\begin{align}\label{eq:X_Diff}  
\mathcal{X}_{\mathrm{Diff}} 
:= \left\{ \mathbf{z} \mid \mathbf{H} {\mathbf{z}} \leq \mathbf{h} \right\}. 
\end{align}
Although analyzing \eqref{eq:X_Diff} is in general hard, its volume (i.e., conservatism) can be computed numerically, see e.g. \cite{Chevallier2022}. 

\textbf{Remark 1:} 
\inArxiv{
Note that \eqref{eqn_constraint_conservatism} uses the Pontryagin difference of the two deterministically tightened convex polyhedrons $\mathcal{X}_{\mathrm{True}}$ and $\mathcal{X}_{\mathrm{DR}}$.
}
The constraint tightening in both \eqref{eqn_true_constraint_tightening} and \eqref{eqn_DR_constraint_tightening} depends upon the risk\footnote{Both $\mathcal{X}_{\mathrm{Diff}}$ and $\mathfrak{C}$ will have interesting observations with a non-uniform risk allocation as in \cite{ono2008iterative}. Analysing this aspect is left for future.} $\delta_{i}$ corresponding to the $i$\textsuperscript{th} constraint.  
For $\delta_{i}, i \in \llbracket n_{x} \rrbracket$ close to $0$, the DR constraint tightening will be significantly stricter than the exact tightening. 

\subsection{Regret of DR SMPC}
We start by noting that for DR SMPC formulation \eqref{eqn_smpc_optimization_problem_surrogate}, regret is introduced by  conservatism: 
Since the deterministic constraint tightening will be different for the case when the true distributions $\mathbb{P}_{w}$ and $\mathbb{P}_{\mathbf{x}}$ are known and unknown respectively, the resulting open-loop costs \eqref{eq:optimal_value_functions} and closed-loop costs till a time point $k$
will be different.
This difference of the optimal costs is essentially referred to as regret. 

\begin{assumption}\label{as:same_disturbance}
    Both the systems under the DR and fully informed SMPC respectively encounter the same disturbance realization $w_k$ at all time steps $k \in \mathbb{N}$.
\end{assumption}
%
%
\begin{definition}
    Given Assumption \ref{as:same_disturbance}, the closed-loop regret accumulated up to time point $k \in \bbn$ is defined as the difference in the closed-loop costs with and without knowledge of $\mathbb{P}^{w}$ and $\mathbb{P}_k^{\mathbf{x}}$ respectively. That is,
    \begin{align}\label{eqn_regretDef_cl}
        \mathfrak{R}_k 
        \! = \! 
        \sum_{i=0}^{k} 
       \left( \lVert \hat{x}_i^\dagger \rVert_Q^2 \! + \! \lVert u^\dagger_i \rVert_R^2 \right) \! - \! \left(\lVert \hat{x}_i^{\star} \rVert_Q^2 \! + \! \lVert u_i^{\star} \rVert_R^2 \right) 
    \end{align}
    where $\cdot^{\star}$ and $\cdot^\dagger$ indicate the fully informed and DR quantities respectively, with $x_0^\dagger = x_0^{\star} = x_0$ and $\mathfrak{R}_0^\mathrm{cl} = 0$.
\end{definition}

Besides $\mathfrak{R}_k$, we can also exploit the information obtained from solving the surrogate SMPC \eqref{eqn_smpc_optimization_problem_surrogate} to define the expected remaining infinite-horizon regret from time point $k$ on in open-loop -- given that \eqref{eqn_cost_reform1} is designed to approximate \eqref{eq:smpc_cost_orig}.

\begin{definition}
    The suboptimality gap, which corresponds to the expected remaining infinite-horizon cost difference in open-loop from time $k \in \bbn$ on, is defined as the difference in the optimal value functions of DR SMPC \eqref{eqn_smpc_optimization_problem_surrogate} with and without knowledge of $\mathbb{P}^{w}$ and $\mathbb{P}_k^{\mathbf{x}}$ respectively. That is,
    \begin{align} \label{eqn_regretDef_ol}
        \mathfrak{G}_k := J^{\dagger}_{\mathrm{SMPC}}(x_k^\dagger) - J^{\star}_{\mathrm{SMPC}}(x_k^{\star}).    
\end{align}
\end{definition}

To analyze $\mathfrak{R}_k$ and $\mathfrak{G}_k$, we derive a closed-form expression for $\mathbf{u}^{\star}_{k}$ and $\mathbf{u}^{\dagger}_{k}$, whose first elements also constitute the closed-loop inputs $u^{\star}_k$ and $u_k^\dagger$.
\color{black}
We reformulate OCP \eqref{eqn_smpc_optimization_problem_surrogate} as a quadratic program (QP) of the form
\begin{subequations} \label{eqn_qp}
\begin{alignat}{2}
    &\! \underset{\mathbf{u_k}}{\mathrm{min}} &\qquad& \frac{1}{2} \norm{\mathbf{u}_k}^{2}_{H} + h_k^\top \mathbf{u}_k + r_k \label{eq:qpCost}\\
    &\mathrm{s.} \, \mathrm{t.} &      & \mathbf{M} \mathbf{u}_k - \mathbf{b}_k \leq \mathbf{0}. \label{eq:qpConstraints}
\end{alignat}
\end{subequations}
We start by reformulating the cost function \eqref{eqn_cost_reform1} of OCP \eqref{eqn_smpc_optimization_problem_surrogate}.
Substituting \eqref{eqn_mean_dynamics} and \eqref{eqn_covariance_dynamics} into \eqref{eqn_cost_reform1}, we obtain 
\begin{align} \label{eqn_cost_reform2}
    J_{\mathrm{SMPC}}(\mathbf{u}_k, x_k) = \frac{1}{2} \norm{\mathbf{u}_k}^{2}_{H} + h_k^{\top} \mathbf{u}_k + r_k, 
\end{align}
where $H = 2(\mathbf{B}^{\top} \mathbf{Q} \mathbf{B} + \mathbf{R}) \succ 0, h_k^{\top} = 2 x^{\top}_{0 \mid k} \mathbf{A}^{\top} \mathbf{Q} \mathbf{B}$ and $r_k = \mathbf{Tr}\left(\mathbf{Q} \mathbf{D} \mathbf{\Sigma}_{w} \mathbf{D}^{\top} \right) + \norm{x_{0 \mid k}}^{2}_{\mathbf{A}^{\top} \mathbf{Q} \mathbf{A}}$. 
Next, we substitute \eqref{eqn_mean_dynamics} and \eqref{eqn_covariance_dynamics} into \eqref{cc_joint_surrogate} to reformulate the tightened state constraints in terms of only the input as 
\begin{align}\label{eq:qp_state_constraints}
    \underbrace{f_i^\top \mathbf{B}}_{=: \bar{f}_i^\top} \mathbf{u}_k \leq \underbrace{g_i - f^\top_i \mathbf{A} x_{0 \mid k}}_{=: \bar{g}_{k,i}} - \psi_i \underbrace{\norm{(\mathbf{D} \mathbf{\Sigma}_{\mathbf{w}} \mathbf{D}^\top)^{\frac{1}{2}} f_i}_2}_{=: v_i}.
\end{align}
Then, $\forall i \in \llbracket n_{x} \rrbracket$, writing \eqref{eq:qp_state_constraints} equivalently in vectorized form as $\mathbf{\bar{F}} \mathbf{u}_k \leq \mathbf{\bar{g}}_k - \bm{\psi}^\top \mathbf{v}$ enables us to define \eqref{eq:qpConstraints} via
\begin{align}\label{eqn_constr_reform}
    \mathbf{M} = \begin{bmatrix} \mathbf{C} \\ \mathbf{\bar{F}} \end{bmatrix}, \quad \mathbf{b}_k = \begin{bmatrix} \mathbf{d} \\ \mathbf{\bar{g}}_k - \mathrm{diag}({\bm{\psi}}) \mathbf{v} \end{bmatrix}.
\end{align}
We now define when an inequality constraint is called active.

\begin{definition}
    An inequality constraint is said to be active if $\mathbf{M}_{i:} \mathbf{u}_k - \mathbf{b}_{k,i} = 0$ and inactive if $\mathbf{M}_{i:} \mathbf{u}_k - \mathbf{b}_{k,i} < 0$, where $\mathbf{M}_{i:}$ denotes the i$^\text{th}$ row of $\mathbf{M}$ and $\mathbf{b}_{k,i}$ the i$^\text{th}$ entry of $\mathbf{b}_k$. The active set $\mathcal{A}_k \subseteq \llbracket n_x+n_u \rrbracket$ is the index set of active inequality constraints.
\end{definition}

\begin{assumption} \label{as:index_set}
    The active set $\mathcal{A}_k^\diamond$ of QP \eqref{eqn_qp} is known $\forall k$.
\end{assumption}

To solve the QP, we assume the following regularity condition on the constraints.

\begin{assumption} \label{as:licq}
    QP \eqref{eqn_qp} satisfies the linear independence constraint qualification (LICQ) criterion, i.e., the gradients of the active inequality constraints are linearly independent.
\end{assumption}

\begin{proposition}\label{prop:qp_solution}
    Under Assumptions \ref{as:index_set} and \ref{as:licq}, the unique and global solution to QP \eqref{eqn_qp} at time step $k$ is given by
    \begin{equation} \label{eq:qp_solution}
    \mathbf{u}_k^\diamond = \left( V_k^\diamond \mathbf{\tilde{M}}^\diamond_k H^{-1} - H^{-1} \right) h_k
    + V_k^\diamond \mathbf{\Tilde{b}}_k^\diamond,
    \end{equation}
    where $V_k^\diamond =  H^{-1} \mathbf{\Tilde{M}}_k^{\diamond^\top} \left( \mathbf{\Tilde{M}}_k^\diamond H^{-1} \mathbf{\Tilde{M}}_k^{\diamond^\top} \right)^{-1}$, $\mathbf{\Tilde{M}_k^\diamond} = [\mathbf{M}_{i:}]_{i \in \mathcal{A}_k^\diamond}$, and $\mathbf{\Tilde{b}}_k^\diamond = [\mathbf{b}_{k,i}]_{i \in \mathcal{A}^\diamond_k}$.
\end{proposition}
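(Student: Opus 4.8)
The plan is to exploit Assumption \ref{as:index_set}, which fixes the active set $\mathcal{A}_k^\diamond$, in order to reduce the inequality-constrained QP \eqref{eqn_qp} to an equality-constrained QP in which the active constraints are imposed with equality, $\mathbf{\tilde{M}}_k^\diamond \mathbf{u}_k = \mathbf{\tilde{b}}_k^\diamond$, while the inactive constraints are strictly slack and may be dropped from the stationarity analysis. First I would form the Lagrangian $L(\mathbf{u}_k,\lambda) = \tfrac{1}{2}\norm{\mathbf{u}_k}_H^2 + h_k^\top \mathbf{u}_k + \lambda^\top(\mathbf{\tilde{M}}_k^\diamond \mathbf{u}_k - \mathbf{\tilde{b}}_k^\diamond)$ and write the first-order conditions: stationarity $H\mathbf{u}_k + h_k + \mathbf{\tilde{M}}_k^{\diamond\top}\lambda = 0$ together with primal feasibility $\mathbf{\tilde{M}}_k^\diamond \mathbf{u}_k = \mathbf{\tilde{b}}_k^\diamond$.

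Since $H \succ 0$ (established right after \eqref{eqn_cost_reform2}), $H^{-1}$ exists and stationarity gives $\mathbf{u}_k = -H^{-1}(h_k + \mathbf{\tilde{M}}_k^{\diamond\top}\lambda)$. Substituting this into the active-constraint equation yields a linear system for the multiplier, $\mathbf{\tilde{M}}_k^\diamond H^{-1}\mathbf{\tilde{M}}_k^{\diamond\top}\lambda = -\mathbf{\tilde{M}}_k^\diamond H^{-1}h_k - \mathbf{\tilde{b}}_k^\diamond$. The key structural point is that $\mathbf{\tilde{M}}_k^\diamond H^{-1}\mathbf{\tilde{M}}_k^{\diamond\top}$ is invertible: under the LICQ of Assumption \ref{as:licq} the matrix $\mathbf{\tilde{M}}_k^\diamond$ has full row rank, and combined with $H^{-1}\succ 0$ this makes the reduced matrix symmetric positive definite. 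Solving for $\lambda$, back-substituting into the expression for $\mathbf{u}_k$, and collecting the terms that multiply $h_k$ after identifying $V_k^\diamond = H^{-1}\mathbf{\tilde{M}}_k^{\diamond\top}(\mathbf{\tilde{M}}_k^\diamond H^{-1}\mathbf{\tilde{M}}_k^{\diamond\top})^{-1}$ reproduces exactly the closed form \eqref{eq:qp_solution}.

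To upgrade this stationary point to the unique global minimizer, I would invoke convexity: because $H \succ 0$ the objective \eqref{eq:qpCost} is strictly convex, and the feasible set $\{\mathbf{u}_k \mid \mathbf{M}\mathbf{u}_k \leq \mathbf{b}_k\}$ is a convex polyhedron, so the KKT conditions are both necessary and sufficient, and the minimizer, if it exists, is unique. The candidate constructed above satisfies stationarity and the active equalities by construction, while LICQ guarantees that the multiplier $\lambda$ is itself unique, so no ambiguity remains.

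The main obstacle — really the only nontrivial step — is justifying that restricting to the active set returns the global solution of the original inequality-constrained problem rather than merely a KKT point of the equality-constrained relaxation. This is handled by Assumption \ref{as:index_set}: since the correct active set is assumed known, complementary slackness holds automatically (the multipliers of inactive constraints are zero), the dropped constraints remain strictly satisfied at the candidate, and strict convexity forces the equality-constrained solution to coincide with the unique global inequality-constrained minimizer. The remaining manipulations are routine linear algebra.
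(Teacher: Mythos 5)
Your proposal is correct and follows essentially the same route as the paper's proof: a KKT/active-set argument that drops the inactive constraints (justified by Assumption \ref{as:index_set} and complementary slackness), solves the reduced stationarity--feasibility system, and invokes $H \succ 0$ together with LICQ for uniqueness and sufficiency. The only cosmetic difference is that you eliminate the multiplier explicitly via the Schur complement $\mathbf{\tilde{M}}_k^\diamond H^{-1} \mathbf{\tilde{M}}_k^{\diamond^\top}$ (and explicitly justify its invertibility from full row rank under LICQ, a detail the paper leaves implicit), whereas the paper inverts the reduced block KKT matrix \eqref{eq:reduced_kkt} directly.
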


\inConf{
\begin{proof}
    See \cite{Pfefferkorn2023}.
\end{proof}
}

\inArxiv{
\begin{proof}
    The result directly follows from applying the method of Lagrange multipliers \cite{Ghojogh2021, Boyd2004} to QP \eqref{eqn_qp}, exploiting sufficiency of the active set \cite{Arnstrom2023}.
    We first formulate the Lagrangian as
    \begin{align}
        \mathcal{L}(\mathbf{u}_k, \mathbf{\mu}_k) = \frac{1}{2} \mathbf{u}_k^\top H \mathbf{u}_k \! + \! h_k^\top \mathbf{u}_k \! + \! r_k \! + \! \bm{\mu}_k^\top (\mathbf{M} \mathbf{u}_k \! - \! \mathbf{b}_k),
    \end{align}
    where $\mathbf{\mu}_k$ is the vector of Lagrange multipliers.
    Applying the Karush-Kuhn-Tucker (KKT) conditions yields
    \begin{subequations}\label{eqn_kkt}
    \begin{align} 
        H \mathbf{u}_k^\diamond + h_k + \mathbf{M}^\top \bm{\mu}^\diamond_k &= \mathbf{0} \label{eq:dLagrandian_du} \\
        \mathbf{M} \mathbf{u}_k^\diamond - \mathbf{b}_k &\leq \mathbf{0} \label{eq:dLagrangian_dmu} \\
        \bm{\mu}^\diamond_k &\geq \mathbf{0} \label{eq:dual_feasibility} \\
        \mu_{k,i}^\diamond (\mathbf{M}_{i:} \mathbf{u}_k^\diamond - \mathbf{b}_{k,i}) &= 0, \quad \forall i \in \llbracket n_x + n_u \rrbracket. \label{eq:complementary_slackness}
    \end{align}
    \end{subequations}
    According to the complementary slackness condition \eqref{eq:complementary_slackness}, we have $\mu_{k,i}^\diamond = 0, \forall i \in \llbracket n_x + n_u \rrbracket \setminus \mathcal{A}_k^\diamond$.
    This enables to remove inactive inequality constraints from the problem formulation (c.f., sufficiency of the active set \cite{Arnstrom2023}).
    Hence, the KKT system \eqref{eqn_kkt} is reduced to a linear system of equations given by
    \begin{equation} \label{eq:reduced_kkt}
        \begin{bmatrix}
            H & \mathbf{\Tilde{M}}_k^{\diamond^\top} \\
            \mathbf{\Tilde{M}}_k^\diamond & \mathbf{0}
        \end{bmatrix}
        \begin{bmatrix}
            \mathbf{u}_k^\diamond \\ \bm{\Tilde{\mu}}_k^\diamond
        \end{bmatrix}
        =
        \begin{bmatrix}
            -h_k \\ \mathbf{\Tilde{b}}_k^\diamond
        \end{bmatrix},
    \end{equation}
    where $\bm{\Tilde{\mu}}_k^\diamond = [\mathbf{\mu}_{k,i}^\diamond]_{i \in \mathcal{A}_k^\diamond}, \mathbf{\Tilde{M}_k^\diamond} = [\mathbf{M}_{i:}]_{i \in \mathcal{A}_k^\diamond}$, $\mathbf{\Tilde{b}}_k^\diamond = [\mathbf{b}_{k,i}]_{i \in \mathcal{A}^\diamond_k}$ and the dual feasibility condition \eqref{eq:dual_feasibility} is trivially fulfilled if $\mathcal{A}_k^\diamond$ is the correct active set.
    Exploiting invertibility of the coefficient matrix of \eqref{eq:reduced_kkt} yields \eqref{eq:qp_solution}.
    Expression \eqref{eq:qp_solution} is guaranteed to be the optimal solution to QP \eqref{eqn_qp} as the KKT conditions are necessary and sufficient under Assumption \ref{as:licq}.
    The solution is unique and global as $H \succ 0$.
\end{proof}
}

The optimal inputs $\mathbf{u}^{\star}_k$ and $\mathbf{u}^\dagger_k$ are obtained using \eqref{eq:qp_solution} when exact and DR constraint tightening are used, respectively.

\inConf{
\textbf{Remark 2:} The active set $\mathcal{A}_k^\diamond$ can be constructed iteratively through additions and removals of constraints to a \textit{working set} $\mathcal{A}_k$ until solution \eqref{eq:qp_solution} is both primal and dual feasible.
This idea is exploited in so-called \textit{active-set methods} for solving QPs \cite{Arnstrom2023}.
Hence, Assumption \ref{as:index_set} is not restrictive and only used to avoid computations that are out of the scope of this work.
As for Assumption \ref{as:licq}, LICQ can be established during control design and is naturally given in many cases.
However, other constraint qualifications can be used, see e.g. \cite{Bergmann2019}.
}

\inArxiv{
\textbf{Remark 2:} The active set $\mathcal{A}_k^\diamond$ can be constructed in an iterative manner through additions and removals of constraints to a \textit{working set} $\mathcal{A}_k$ until primal feasibility \eqref{eq:dLagrangian_dmu} and dual feasibility \eqref{eq:dual_feasibility} are satisfied by solution \eqref{eq:qp_solution}.
This idea is exploited in so-called \textit{active-set methods} for solving QPs \cite{Arnstrom2023}.
Hence, Assumption \ref{as:index_set} is not restrictive and only used to avoid computations that are out of the scope of this work.
As for Assumption \ref{as:licq}, the LICQ criterion is commonly employed in practice as it is a rather weak condition. 
In the considered controller set-up, LICQ can be established during control design and is naturally given in many cases.
However, other constraint qualifications might be used, see e.g. \cite{Bergmann2019}. 
}


\textbf{Remark 3:} 
Given the optimal input sequences $\mathbf{u}_k^{\star}$ and $\mathbf{u}_k^\dagger$ from Proposition \ref{prop:qp_solution}, it can be seen that $\mathfrak{R}_k$ and $\mathfrak{G}_k$ are induced through three main effects:
\begin{enumerate}
    \item[(i)] The active set $\mathcal{A}_k$ differ between the DR and the fully informed controller. The dependencies of $\mathfrak{R}_k$ and $\mathfrak{G}_k$ on the active sets are highly nonlinear.
    \item[(ii)] The initial states $x_{0 \mid k}^{\star}$ and $x_{0 \mid k}^\dagger$ differ as the feasible set of states of the DR controller is smaller than that of the fully informed controller.
    \item[(iii)] The tightening factors $\psi_i, \forall i \in \mathcal{A}_k$ differ between the DR and the fully informed controller.
\end{enumerate}

While $\mathfrak{R}_k$ and $\mathfrak{G}_k$ can be easily computed using the previously stated results, we analyze their behaviors.
First, we start by analyzing $\mathfrak{G}_k$, restricting ourselves to the special time instances defined below.

\begin{definition}
The set $\mathcal{I}$ contains the time steps when the active sets $\mathcal{A}_k^{\star}$ and $\mathcal{A}_k^\dagger$ were same. That is,
\begin{align} \label{eqn_bold_tau}
\mathcal{I} := \left\{ k \in \llbracket N \rrbracket \, \bigg| \, \mathcal{A}_k^{\star} = \mathcal{A}_k^\dagger \right\}. 
\end{align}
\end{definition}
Then, $\forall k \in \mathcal{I},~ \mathbf{\Tilde{M}}_k^{\star} = \mathbf{\Tilde{M}}_k^\dagger = \mathbf{\Tilde{M}}_k$, $\mathbf{V}_k^{\star} = \mathbf{V}_k^\dagger = \mathbf{V}_k$ and the difference between $\mathbf{\Tilde{b}_k^{\star}}$ and $\mathbf{\Tilde{b}_k^\dagger}$ is only due to the tightening factors $\bm{\Tilde{\psi}}^{\star}$ and $\bm{\Tilde{\psi}}^\dagger$ corresponding to $\mathcal{A}_k^{\star}$ and $\mathcal{A}_k^\dagger$.   
\begin{theorem}
    Given $\mathcal{I}$, let $\mathbf{V}_k = \begin{bmatrix} \mathbf{V}_{1,k} & \mathbf{V}_{2,k} \end{bmatrix}, \forall k \in \mathcal{I}$, and let Assumptions \ref{as:index_set} and \ref{as:licq} be satisfied. Then, $\mathfrak{G}_k$ at time steps $k \in \mathcal{I}$ is given by
    \begin{align}
        \begin{split}\label{eq:regret_reform2}
            \mathfrak{G}k \!=\! & -(x_{0 \mid k}^{\star} \! - \! x_{0 \mid k}^\dagger)^\top \Lambda_{1,k} (x_{0 \mid k}^{\star} \! + \! x_{0 \mid k}^\dagger) \\ 
            & - \mathbf{\Tilde{v}}_k^\top \mathrm{diag}(\bm{\Tilde{\psi}}_k^{\star} \! - \! \bm{\Tilde{\psi}}_k^\dagger) \Lambda_{2,k} \mathrm{diag}(\bm{\Tilde{\psi}}_k^{\star} \! + \! \bm{\Tilde{\psi}}_k^\dagger) \mathbf{\Tilde{v}}_k \\
            & + (x_{0 \mid k}^{\star} \! - \! x_{0 \mid k}^\dagger)^\top \Lambda_{3,k} \mathrm{diag}(\bm{\Tilde{\psi}}^{\star}_k \! + \! \bm{\Tilde{\psi}}^\dagger_k) \mathbf{\Tilde{v}}_k \\
            & + (x_{0 \mid k}^{\star} \! + \! x_{0 \mid k}^\dagger)^\top \Lambda_{3,k} \mathrm{diag}(\bm{\Tilde{\psi}}^{\star}_k \! - \! \bm{\Tilde{\psi}}^\dagger_k) \mathbf{\Tilde{v}}_k \\
            & - (x_{0 \mid k}^{\star} \! - \! x_{0 \mid k}^\dagger)^\top \Lambda_{4,k} + \mathbf{\Tilde{v}}^\top_k \mathrm{diag}(\bm{\Tilde{\psi}}^{\star}_k \! - \! \bm{\Tilde{\psi}}^\dagger_k) \Lambda_{5,k},
        \end{split}
    \end{align}
    where
    \begin{align*}
        & \Lambda_{1,k} = \frac{1}{2} \alpha_k^\top H \alpha_k + \frac{1}{2}( \Tilde{h}^\top \alpha_k + \alpha_k^\top \Tilde{h}) + \mathbf{A}^\top \mathbf{QA}, \\
        & \Lambda_{2,k} = \frac{1}{2} \mathbf{V}_{2,k}^\top H \mathbf{V}_{2,k} , \quad \Lambda_{3,k} = \frac{1}{2}(\Tilde{h}^\top \mathbf{V}_{2,k} + \alpha_k^\top H \mathbf{V}_{2,k}) \\
        & \Lambda_{4,k} = \tilde{h}^\top \gamma_k + \alpha_k^\top H \gamma_k, \quad \Lambda_{5,k} = \mathbf{V}_{2,k}^\top H \gamma_k
    \end{align*}
    with $\alpha_k = (V_k \mathbf{\tilde{M}}_k H^{-1} - H^{-1})  \tilde{h} - \mathbf{V}_{2,k} \mathbf{\Tilde{F}}_k \mathbf{A}$, $\Tilde{h} = 2 \mathbf{B}^\top \mathbf{Q} \mathbf{A}$ and $\gamma_k = \mathbf{V}_{1,k} \mathbf{\tilde{d}}_k + \mathbf{V}_{2,k} \mathbf{\Tilde{g}}_k$.
\end{theorem}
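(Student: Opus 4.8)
The plan is to substitute the closed-form optimizer from Proposition~\ref{prop:qp_solution} into the reformulated cost \eqref{eqn_cost_reform2} for each case $\diamond \in \{\star,\dagger\}$, form the difference $\mathfrak{G}_k = J^\dagger_{\mathrm{SMPC}} - J^\star_{\mathrm{SMPC}}$, and regroup. The key simplification is that $k \in \mathcal{I}$ forces $\mathbf{\tilde M}_k^\star = \mathbf{\tilde M}_k^\dagger = \mathbf{\tilde M}_k$ and hence $V_k^\star = V_k^\dagger = V_k$, so the only data that still carry a $\diamond$-label are the initial state $x_{0\mid k}^\diamond$ (which enters $h_k^\diamond = \tilde h\,x_{0\mid k}^\diamond$, the active rows of $\mathbf{\tilde b}_k^\diamond$, and $r_k^\diamond$) and the active tightening vector $\bm{\tilde\psi}_k^\diamond$. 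First I would split $\mathbf{\tilde b}_k^\diamond$ into its input-constraint block (the constant $\mathbf{\tilde d}_k$, multiplied by $\mathbf{V}_{1,k}$) and its state-constraint block $\mathbf{\tilde g}_k - \mathrm{diag}(\bm{\tilde\psi}_k^\diamond)\mathbf{\tilde v}_k$ (multiplied by $\mathbf{V}_{2,k}$), whose $x_{0\mid k}^\diamond$-dependence is $-\mathbf{\tilde F}_k\mathbf{A}x_{0\mid k}^\diamond$. Collecting the $x_{0\mid k}^\diamond$-linear part into $\alpha_k$ and the constant part into $\gamma_k$ rewrites the optimizer as the affine map
\[
\mathbf{u}_k^\diamond = \alpha_k\,x_{0\mid k}^\diamond + \gamma_k - \mathbf{V}_{2,k}\,\mathrm{diag}(\bm{\tilde\psi}_k^\diamond)\,\mathbf{\tilde v}_k,
\]
with $\alpha_k$ and $\gamma_k$ exactly as stated in the theorem.

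Next I would substitute this affine map into $J^\diamond_{\mathrm{SMPC}} = \tfrac12\|\mathbf{u}_k^\diamond\|_H^2 + (h_k^\diamond)^\top\mathbf{u}_k^\diamond + r_k^\diamond$ and expand. The purely constant pieces---the trace term in $r_k^\diamond$ and $\tfrac12\gamma_k^\top H\gamma_k$---are common to both cases and cancel in the difference. What survives organizes into a term quadratic in $x_{0\mid k}^\diamond$ with matrix $\Lambda_{1,k}$ (combining $\tfrac12\alpha_k^\top H\alpha_k$, the symmetrized $\tilde h^\top\alpha_k$ coming from the bilinear cost term $(h_k^\diamond)^\top\mathbf{u}_k^\diamond$, and $\mathbf{A}^\top\mathbf{Q}\mathbf{A}$ from $r_k^\diamond$); a term quadratic in $\bm{\tilde\psi}_k^\diamond$ with $\Lambda_{2,k}=\tfrac12\mathbf{V}_{2,k}^\top H\mathbf{V}_{2,k}$; a bilinear $x$--$\psi$ coupling governed by $\Lambda_{3,k}$; and the two linear couplings that $\gamma_k$ produces, giving $\Lambda_{4,k}$ (linear in $x_{0\mid k}^\diamond$) and $\Lambda_{5,k}$ (linear in $\bm{\tilde\psi}_k^\diamond$).

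The factorized form in \eqref{eq:regret_reform2} then follows from two elementary identities applied term by term. For the quadratic and linear contributions I would use $a^\top P a - b^\top P b = (a-b)^\top P(a+b)$ for symmetric $P$, which converts $J^\dagger - J^\star$ into products of the differences $x_{0\mid k}^\star - x_{0\mid k}^\dagger$, $\bm{\tilde\psi}_k^\star - \bm{\tilde\psi}_k^\dagger$ and the sums $x_{0\mid k}^\star + x_{0\mid k}^\dagger$, $\bm{\tilde\psi}_k^\star + \bm{\tilde\psi}_k^\dagger$; the leading signs are fixed by reversing the order of subtraction, and the quadratic-in-$\psi$ term additionally absorbs $\mathrm{diag}(\bm{\tilde\psi}_k^\diamond)\mathbf{\tilde v}_k = \mathrm{diag}(\mathbf{\tilde v}_k)\bm{\tilde\psi}_k^\diamond$ to land the $\mathbf{\tilde v}_k^\top\mathrm{diag}(\cdot)\Lambda_{2,k}\mathrm{diag}(\cdot)\mathbf{\tilde v}_k$ shape.

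The main obstacle is the single bilinear $x$--$\psi$ coupling: because \emph{both} $x_{0\mid k}^\diamond$ and $\bm{\tilde\psi}_k^\diamond$ change between the two cases, the simple difference-of-squares identity no longer applies, and I would instead invoke the bilinear identity $B(a_1,b_1) - B(a_2,b_2) = \tfrac12 B(a_1-a_2,\,b_1+b_2) + \tfrac12 B(a_1+a_2,\,b_1-b_2)$. This splits the lone cross term into the two $\Lambda_{3,k}$ contributions appearing in \eqref{eq:regret_reform2}---one pairing the state difference with the tightening sum, the other the state sum with the tightening difference. Beyond this splitting, the proof is careful sign and symmetry bookkeeping through the substitutions rather than any deep argument, and closes once every expanded term is matched to its corresponding $\Lambda_{j,k}$.
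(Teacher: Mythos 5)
Your proposal is correct and follows exactly the route the paper takes: substitute the closed-form optimizer \eqref{eq:qp_solution} into the cost representation \eqref{eqn_cost_reform2}, take the difference \eqref{eqn_regretDef_ol}, and regroup algebraically --- the paper's proof states precisely this and omits the details. Your filled-in details check out, including the affine decomposition $\mathbf{u}_k^\diamond = \alpha_k x_{0\mid k}^\diamond + \gamma_k - \mathbf{V}_{2,k}\mathrm{diag}(\bm{\tilde\psi}_k^\diamond)\mathbf{\tilde v}_k$ matching the stated $\alpha_k,\gamma_k$, the cancellation of the trace term, the symmetry of $\Lambda_{1,k}$ needed for the difference-of-squares identity, and the bilinear splitting that produces the two $\Lambda_{3,k}$ terms with the correct signs.
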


\begin{proof}
    Expression \eqref{eq:regret_reform2} is obtained from \eqref{eqn_regretDef_ol} using the cost function representation \eqref{eqn_cost_reform2}, substituting the optimal input sequences $\mathbf{u}^{\star}_k$ and $\mathbf{u}^\dagger_k$ from \eqref{eq:qp_solution} and the definitions of $\mathbf{\tilde{M}}_k$ and $\mathbf{\Tilde{b}}_k$ and performing a series of algebraic operations (not shown here for the brevity of presentation).
\end{proof}

Note that \eqref{eq:regret_reform2} is quadratic in the initial conditions $x_{0 \mid k}^{\star}$, $x_{0 \mid k}^\dagger$ and the tightening factors $\bm{\Tilde{\psi}}_k^{\star}$, $\bm{\Tilde{\psi}}_k^\dagger$ but not directly in their respective differences.

\begin{corollary} \label{corollary_1}
    If no (state and input) constraints are active, then \eqref{eq:regret_reform2} simplifies to
    \begin{align}
    \begin{split}\label{eq:regret_unconstr}
        \mathfrak{G}_k & = -(x_{0 \mid k}^{\star} \! - \! x_{0 \mid k}^\dagger)^\top \Lambda_{1} (x_{0 \mid k}^{\star} \! + \! x_{0 \mid k}^\dagger), \\
        \Lambda_1 & = \mathbf{A}^\top \mathbf{QA} - \frac{1}{2} \Tilde{h}^\top H^{-1} \Tilde{h}
    \end{split}
    \end{align}
    and the optimal input sequences are simply given by the linear feedback laws $\mathbf{u}_k^{\star} = -H^{-1} \tilde{h} x_{0 \mid k}^{\star}$ and $\mathbf{u}_k^\dagger = -H^{-1} \tilde{h} x_{0 \mid k}^\dagger$ from the linear quadratic regulator optimization problem.
\end{corollary}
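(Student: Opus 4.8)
The plan is to treat this corollary as the empty-active-set specialization of Proposition~\ref{prop:qp_solution} and then evaluate the gap~\eqref{eqn_regretDef_ol} directly. First I would observe that if no constraints are active then $\mathcal{A}_k^\diamond = \emptyset$, so the rows $\mathbf{\tilde{M}}_k^\diamond$ and the block $V_k^\diamond$ drop out of~\eqref{eq:qp_solution}; equivalently, the strongly convex objective~\eqref{eqn_cost_reform2} (with $H \succ 0$) is minimized at its unique stationary point. Setting the gradient $H\mathbf{u}_k + h_k$ to zero gives $\mathbf{u}_k^\diamond = -H^{-1}h_k$. Since $h_k = \tilde{h}\, x_{0\mid k}^\diamond$ with $\tilde{h} = 2\mathbf{B}^\top \mathbf{Q}\mathbf{A}$, this is exactly the linear feedback law $\mathbf{u}_k^\diamond = -H^{-1}\tilde{h}\, x_{0\mid k}^\diamond$ claimed in the statement, which coincides with the unconstrained LQR solution.

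Next I would substitute this minimizer back into the cost. Using $\tfrac{1}{2}\norm{-H^{-1}h_k}^2_H + h_k^\top(-H^{-1}h_k) = -\tfrac{1}{2} h_k^\top H^{-1} h_k$, which relies on symmetry of $H^{-1}$, the optimal value becomes $J^\diamond_{\mathrm{SMPC}}(x_k^\diamond) = -\tfrac{1}{2} h_k^\top H^{-1} h_k + r_k$. Inserting $h_k = \tilde{h}\, x_{0\mid k}^\diamond$ and $r_k = \mathbf{Tr}(\mathbf{Q}\mathbf{D}\mathbf{\Sigma}_{w}\mathbf{D}^\top) + \norm{x_{0\mid k}^\diamond}^2_{\mathbf{A}^\top\mathbf{Q}\mathbf{A}}$ collects the state-dependent part into a single quadratic form,
\begin{equation*}
J^\diamond_{\mathrm{SMPC}}(x_k^\diamond) = (x_{0\mid k}^\diamond)^\top \Lambda_1\, x_{0\mid k}^\diamond + \mathbf{Tr}(\mathbf{Q}\mathbf{D}\mathbf{\Sigma}_{w}\mathbf{D}^\top), \qquad \Lambda_1 = \mathbf{A}^\top\mathbf{Q}\mathbf{A} - \tfrac{1}{2}\tilde{h}^\top H^{-1}\tilde{h}.
\end{equation*}

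Finally, I would form the difference $\mathfrak{G}_k = J^\dagger_{\mathrm{SMPC}}(x_k^\dagger) - J^\star_{\mathrm{SMPC}}(x_k^\star)$. The trace terms are identical in both cases — they depend only on $\mathbf{Q},\mathbf{D},\mathbf{\Sigma}_{w}$ and not on the tightening factors — and therefore cancel, leaving a difference of two quadratic forms sharing the same symmetric matrix $\Lambda_1$. Applying the identity $a^\top\Lambda_1 a - b^\top\Lambda_1 b = (a-b)^\top\Lambda_1(a+b)$, valid because $\Lambda_1 = \Lambda_1^\top$, with $a = x_{0\mid k}^\dagger$ and $b = x_{0\mid k}^\star$ yields $\mathfrak{G}_k = -(x_{0\mid k}^\star - x_{0\mid k}^\dagger)^\top\Lambda_1(x_{0\mid k}^\star + x_{0\mid k}^\dagger)$, which is~\eqref{eq:regret_unconstr}.

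The one point that needs care — the main, if modest, obstacle — is justifying the empty-active-set reduction of~\eqref{eq:qp_solution}: the matrix $V_k^\diamond$ is defined through an inverse that is vacuous when $\mathcal{A}_k^\diamond=\emptyset$, so rather than taking a limit in that formula I would re-derive the minimizer from stationarity of the strongly convex QP, which is unambiguous. Everything else is routine linear algebra. One should also note in passing that the $\Lambda_1$ obtained here is precisely the common value of $\Lambda_{1,k}$ from the preceding theorem once all the active-set-dependent quantities ($\alpha_k$, $\mathbf{V}_{2,k}$, $\gamma_k$) vanish, so the corollary is a consistent reduction of~\eqref{eq:regret_reform2} as well.
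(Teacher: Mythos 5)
Your proof is correct, and it reaches the corollary by a slightly different route than the paper intends. The paper states the corollary as a direct specialization of \eqref{eq:regret_reform2}: with $\mathcal{A}_k^\diamond = \emptyset$ the active-set quantities $\mathbf{\tilde{M}}_k$, $\mathbf{V}_{1,k}$, $\mathbf{V}_{2,k}$, $\gamma_k$ vanish, so $\Lambda_{2,k},\Lambda_{3,k},\Lambda_{4,k},\Lambda_{5,k}$ drop out, $\alpha_k$ collapses to $-H^{-1}\tilde{h}$, and $\Lambda_{1,k} = \tfrac{1}{2}\tilde{h}^\top H^{-1}\tilde{h} - \tilde{h}^\top H^{-1}\tilde{h} + \mathbf{A}^\top\mathbf{Q}\mathbf{A} = \mathbf{A}^\top\mathbf{Q}\mathbf{A} - \tfrac{1}{2}\tilde{h}^\top H^{-1}\tilde{h}$, leaving only the first term of \eqref{eq:regret_reform2}. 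You instead bypass the general formula entirely and re-derive everything from stationarity of the strongly convex QP: $\mathbf{u}_k^\diamond = -H^{-1}h_k = -H^{-1}\tilde{h}\,x_{0\mid k}^\diamond$, plug-in to get $J^\diamond_{\mathrm{SMPC}} = (x_{0\mid k}^\diamond)^\top\Lambda_1 x_{0\mid k}^\diamond + \mathbf{Tr}(\mathbf{Q}\mathbf{D}\mathbf{\Sigma}_{w}\mathbf{D}^\top)$, cancel the controller-independent trace terms, and apply the symmetric-matrix identity $a^\top\Lambda_1 a - b^\top\Lambda_1 b = (a-b)^\top\Lambda_1(a+b)$. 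All steps check out (including the sign, since $\mathfrak{G}_k = J^\dagger - J^\star$ gives $(x^\dagger - x^\star)^\top\Lambda_1(x^\dagger + x^\star) = -(x^\star - x^\dagger)^\top\Lambda_1(x^\star + x^\dagger)$, and symmetry of $\Lambda_1$ follows from symmetry of $\mathbf{Q}$ and $H$). Your route buys two things: it cleanly sidesteps the formally vacuous inverse in $V_k^\diamond$ when $\mathcal{A}_k^\diamond = \emptyset$ — a point the paper glosses over — and it makes the LQR-feedback claim a byproduct rather than an assertion; the paper's route, by contrast, exhibits the corollary as an internal consistency check on the theorem's formula, which you recover anyway via your closing verification that $\alpha_k = -H^{-1}\tilde{h}$ reproduces $\Lambda_{1,k} = \Lambda_1$. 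No gaps.
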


We continue analyzing the general behavior of $\mathfrak{G}_k$ when no constraints are active, given that the fully informed and the DR SMPC are input-to-state stabilizing in probability according to Assumption \ref{as:ISSp}. 
\begin{assumption}\label{as:phi}
    There exist nonempty sets $\Phi, \Omega^\star, \Omega^\dagger$ such that with $\rho, \varrho$ from Assumption \ref{as:ISSp} it holds that
    \begin{enumerate}
        \item[(i)] $\Phi := \{ x \in \mathcal{X} \mid \| x \| \leq r_\Phi, r_\Phi \in \bbr_{>0} \}$ is such that $\forall x \in \Phi$, no (exactly or distributionally robustly) tightened state constraints are active and no input constraints become active when evaluating the fully informed and the DR SMPC,
        \item[(ii)] $\Omega^\star \! := \! \{ x \in \Phi \mid \| x \| \leq r_{\Omega^\star}, r_{\Omega^\star} = \frac{r_\Phi - \varrho(\| w \|_{L^p})}{M^\star} \} \! \subseteq \! \Phi$,
        \item [(iii)] $\Omega^\dagger \! := \! \{ x \in \Phi \mid \| x \| \leq r_{\Omega^\dagger}, r_{\Omega^\dagger} = \frac{r_\Phi - \varrho(\| w \|_{L^p})}{M^\dagger} \} \! \subseteq \! \Phi$,
    \end{enumerate}
    and $r_{\Omega^\star} \! > \! \frac{\rho(\| w \|_{L^p})}{\omega^\star} \! + \! \epsilon$, $r_{\Omega^\dagger} \! > \! \frac{\rho(\| w \|_{L^p})}{\omega^\dagger} \! + \! \epsilon$ for some $\epsilon \! > \! 0$.
    
\end{assumption}

\inConf{The next lemma proves the recurrent nature (see appendix of \cite{Pfefferkorn2023} for the definition) of the sets $\Phi, \Omega^\star$ and $\Omega^\dagger$ as well as boundedness in probability of system \eqref{eqn_system_dynamics} to $\Phi$ under both the fully informed and the DR SMPC.}
\inArxiv{The next lemma proves the recurrent nature (see appendix for the definition) of the sets $\Phi, \Omega^\star$ and $\Omega^\dagger$ as well as boundedness in probability of system \eqref{eqn_system_dynamics} to $\Phi$ under both the fully informed and the DR SMPC.}

\begin{lemma}\label{lemma:phi_rpi}
    Under Assumptions \ref{as:ISSp}-\ref{as:phi}, the sets $\Phi$ and  $\Omega^\diamond$ are recurrent for system \eqref{eqn_system_dynamics} under the fully informed ($\diamond \equiv \star$) and the DR ($\diamond \equiv \dagger$) SMPC.
    Furthermore, if $x^\diamond_k \in \Omega^\diamond$, then system \eqref{eqn_system_dynamics} satisfies    
    \begin{align}\label{eq:phi_bounded_in_p}
        \mathbb{P} [x^\diamond_{k+i} \in \Phi, \forall i \leq \bar{K}] \geq 1 - \varepsilon.
    \end{align}
\end{lemma}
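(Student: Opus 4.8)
The plan is to establish recurrence of the two families of sets by combining the exponential ISSp property from Assumption~\ref{as:ISSp} with the geometric relationship between the radii $r_\Phi$, $r_{\Omega^\star}$, $r_{\Omega^\dagger}$ fixed in Assumption~\ref{as:phi}. Recall that a set is \emph{recurrent} if the system enters it infinitely often with probability one (or, in the probabilistic sense used here, returns to it within the horizon $\bar K$ with high probability). The two claims—recurrence of $\Phi$ and $\Omega^\diamond$, and the probabilistic bound \eqref{eq:phi_bounded_in_p}—are really the same estimate applied at two scales, so I would prove \eqref{eq:phi_bounded_in_p} first and then harvest recurrence as a consequence.

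\textbf{Step 1 (bound the trajectory from $\Omega^\diamond$ inside $\Phi$).} Suppose $x_k^\diamond \in \Omega^\diamond$, so $\|x_k^\diamond\| \leq r_{\Omega^\diamond}$. I would invoke the exponential eISSp estimate \eqref{eq:eISSp_by_assumption}, which guarantees that with probability at least $1-\varepsilon$,
\begin{equation*}
    \|x_{k+i}^\diamond\| \leq \beta^\diamond(\|x_k^\diamond\|, i) + \varrho(\|w_{k+i}\|_{L^p}) \leq M^\diamond \nu^{\diamond^i} \|x_k^\diamond\| + \varrho(\|w_{k+i}\|_{L^p})
\end{equation*}
for all $i \leq \bar K$. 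Since $\nu^\diamond \in (0,1)$ we have $\nu^{\diamond^i} \leq 1$, so $M^\diamond \nu^{\diamond^i}\|x_k^\diamond\| \leq M^\diamond r_{\Omega^\diamond}$. Substituting the definition $r_{\Omega^\diamond} = (r_\Phi - \varrho(\|w\|_{L^p}))/M^\diamond$ from Assumption~\ref{as:phi}(ii)--(iii), this upper bound collapses to $r_\Phi - \varrho(\|w\|_{L^p}) + \varrho(\|w_{k+i}\|_{L^p})$. Under Assumption~\ref{as:same_disturbance}, interpreting $\|w\|_{L^p}$ as the common disturbance bound so that $\varrho(\|w_{k+i}\|_{L^p}) \leq \varrho(\|w\|_{L^p})$, the two $\varrho$ terms cancel and we obtain $\|x_{k+i}^\diamond\| \leq r_\Phi$, i.e. $x_{k+i}^\diamond \in \Phi$. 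This yields \eqref{eq:phi_bounded_in_p} directly, since the single high-probability event from \eqref{eq:eISSp_by_assumption} already delivers the whole horizon $i \leq \bar K$ simultaneously.

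\textbf{Step 2 (recurrence).} For recurrence of $\Omega^\diamond$ I would use the Lyapunov part of Assumption~\ref{as:ISSp}, namely $\mathbb{E}[J_{\mathrm{SMPC}}^\diamond(x_{k+1}) - J_{\mathrm{SMPC}}^\diamond(x_k) \mid x_k] \leq -\omega^\diamond J_{\mathrm{SMPC}}^\diamond(x_k) + \rho(\|w_k\|_{L^p})$, which is a supermartingale-type drift condition outside the sublevel set where $\omega^\diamond J_{\mathrm{SMPC}}^\diamond(x_k) \leq \rho(\|w_k\|_{L^p})$. The strict inequalities $r_{\Omega^\diamond} > \rho(\|w\|_{L^p})/\omega^\diamond + \epsilon$ in Assumption~\ref{as:phi} are precisely what guarantees that this sublevel set sits strictly inside $\Omega^\diamond$, so the expected decrease forces the trajectory back toward $\Omega^\diamond$; a standard stochastic-Lyapunov / supermartingale convergence argument (as in the referenced \cite{Culbertson2023}) then gives recurrence of $\Omega^\diamond$ and, since $\Omega^\diamond \subseteq \Phi$, of $\Phi$ as well.

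\textbf{The main obstacle} I expect is the careful handling of the disturbance term: the estimate in Step~1 only cancels cleanly if $\varrho(\|w_{k+i}\|_{L^p})$ is controlled by the same constant $\varrho(\|w\|_{L^p})$ that appears in the radius $r_{\Omega^\diamond}$, so I would need to state explicitly (or extract from Assumption~\ref{as:ISSp}) that the disturbances are identically distributed in $L^p$ so that $\|w_{k+i}\|_{L^p}$ is constant in $i$, making $\varrho(\|w_{k+i}\|_{L^p}) = \varrho(\|w\|_{L^p})$. A secondary subtlety is reconciling the two probabilistic notions—the one-shot bound \eqref{eq:phi_bounded_in_p} over the finite horizon $\bar K$ versus genuine (infinitely-often) recurrence—which requires either restarting the eISSp estimate at successive re-entry times or appealing to the supermartingale argument of Step~2 rather than iterating Step~1. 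Everything else is bookkeeping with the $\mathcal{KL}$/$\mathcal{K}$ bounds and the explicit radii.
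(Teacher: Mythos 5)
Your proposal is correct and takes essentially the same route as the paper: the bound \eqref{eq:phi_bounded_in_p} is obtained exactly as you describe, from \eqref{eq:eISSp_by_assumption} with $\beta^\diamond(\lVert x_k^\diamond \rVert, i) \leq \beta^\diamond(\lVert x_k^\diamond \rVert, 0) = M^\diamond \lVert x_k^\diamond \rVert$ and the definition of $r_{\Omega^\diamond}$, while recurrence follows by invoking the result of \cite{Culbertson2023} that sublevel sets $\mathcal{V}_{\gamma^\diamond}$ of the eISSp Lyapunov functions $J^\diamond_{\mathrm{SMPC}}$ with $\gamma^\diamond > \rho(\lVert w \rVert_{L^p})/\omega^\diamond$ are recurrent, the margin $r_{\Omega^\diamond} > \rho(\lVert w \rVert_{L^p})/\omega^\diamond + \epsilon$ placing such a sublevel set inside $\Omega^\diamond$, and $\Omega^\diamond \subseteq \Phi$ finishing the claim. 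The one subtlety you flag is real but resolved by the model, not by Assumption \ref{as:same_disturbance}: the $w_k$ are i.i.d.\ by the problem setup, so $\lVert w_{k+i} \rVert_{L^p} = \lVert w \rVert_{L^p}$ for all $i$ and the $\varrho$ terms cancel as you anticipated.
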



\begin{proof}
    As the optimal value functions \eqref{eq:optimal_value_functions} are eISSp Lyapunov functions for the respective closed-loop systems by Assumption \ref{as:ISSp}, any sublevel sets $\mathcal{V}_{\gamma^\star} \! := \! \{ x \! \in \! \mathcal{X} \! \mid \! J^\star_\mathrm{SMPC}(x) \! \leq \! \gamma^\star \}, \gamma^\star > \frac{\rho(\| w \|_{L^p})}{w^\star}$ and $\mathcal{V}_{\gamma^\dagger} \! := \! \{ x \! \in \! \mathcal{X} \! \mid \! J^\dagger_\mathrm{SMPC}(x) \! \leq \! \gamma^\dagger \}, \gamma^\dagger > \frac{\rho(\| w \|_{L^p})}{w^\dagger}$ are recurrent, see \cite{Culbertson2023}.
    Since $r_{\Omega^\star} > \frac{\rho(\| w \|_{L^p})}{w^\star} + \epsilon$ and $r_{\Omega^\dagger} > \frac{\rho(\| w \|_{L^p})}{w^\dagger} + \epsilon$, there exist $\bar{\gamma}^\star > \frac{\rho(\| w \|_{L^p})}{w^\star}$ and $\bar{\gamma}^\dagger > \frac{\rho(\| w \|_{L^p})}{w^\dagger}$ such that $\mathcal{V}_{\bar{\gamma}^{\star}} \subset \Omega^\star, \mathcal{V}_{\bar{\gamma}^\dagger} \subset \Omega^\dagger$. Hence, $\Omega^\star, \Omega^\dagger$ are recurrent and since $\Omega^\star \subseteq \Phi, \Omega^\dagger \subseteq \Phi$, $\Phi$ is recurrent under both the fully informed and the DR controller. Expression \eqref{eq:phi_bounded_in_p} follows directly from \eqref{eq:eISSp_by_assumption} as $ \|x^\diamond_{k+i} \| \leq \beta^\diamond(\|x_k^\diamond\|, i) + \varrho(\|w_{k+i}\|_{L^p}) \leq \beta^\diamond(\|x_k^\diamond\|, 0) + \varrho(\|w_{k+i}\|_{L^p})$ with probability at least $1 - \varepsilon$ for a finite horizon $i \leq \bar{K}$ and $\beta^\diamond(\|x_k^\diamond\|, 0) + \varrho(\|w_{k+i}\|_{L^p}) \leq r_\Phi$ for $x_k^\diamond \in \Omega^\diamond$.
    %
\end{proof}
Lemma \ref{lemma:phi_rpi} states that the system visits $\Phi$ in finite time under both controllers.
Furthermore, once the system has entered $\Phi$, it will remain in $\Phi$ for a finite horizon with high probability.
\inArxiv{That is, $\Phi$ takes the role of a probabilistic invariant set.}
We now study the system's behavior while in $\Phi$.

\begin{lemma}\label{lemma:convergence_of_error}
    Under Assumptions \ref{as:ISSp}-\ref{as:phi}, there exists $\tilde{\beta} \in \mathcal{KL}$ such that if $x_k^{\star} \in \Omega^\star$ and $x_k^\dagger \in \Omega^\dagger$, it holds $\forall i \leq \bar{K}$ that
    \begin{equation}\label{eq:error_system_convergence}
        \mathbb{P} [ \lVert x_{k+i}^{\star} - x_{k+i}^\dagger \rVert \leq \tilde{\beta}( \lVert x_k^{\star} \rVert + \lVert x_k^\dagger \rVert, i ) ] \geq \max \{ 0, 1 - 2 \varepsilon \}.
    \end{equation}
\end{lemma}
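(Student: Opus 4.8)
The plan is to show that, as long as both closed-loop trajectories stay inside $\Phi$, the common disturbance \emph{cancels} in the state error, so that the error obeys the deterministic, exponentially stable nominal dynamics; the probabilistic claim then drops out of a union bound over the two ``stay-in-$\Phi$'' events supplied by Lemma~\ref{lemma:phi_rpi}.

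First I would invoke Assumption~\ref{as:phi}(i) together with Corollary~\ref{corollary_1}: for any $x \in \Phi$ neither the (exactly or DR) tightened state constraints nor the input constraints are active for either controller, so both the fully informed and the DR SMPC collapse to the \emph{same} unconstrained law $\mathbf{u}_k^\diamond = -H^{-1}\tilde h\, x_{0\mid k}^\diamond$. Taking the first block row, the realized closed-loop input is $u_k^\diamond = -K x_k^\diamond$ with a single gain $K$ common to both cases (since $H$ and $\tilde h$ depend only on the cost, not on the tightening factors $\psi_i$). Hence, while a trajectory remains in $\Phi$, it obeys $x_{k+1}^\diamond = A_{\mathrm{cl}} x_k^\diamond + w_k$ with $A_{\mathrm{cl}} := A - BK$.

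The key step is then to form the error $e_k := x_k^{\star} - x_k^\dagger$. By Assumption~\ref{as:same_disturbance} both systems see the identical realization $w_k$, so while both trajectories are in $\Phi$ the noise cancels exactly and $e_{k+1} = A_{\mathrm{cl}} e_k$; the error follows the deterministic nominal dynamics with \emph{no residual disturbance term}, which is precisely why \eqref{eq:error_system_convergence} carries no $\varrho(\lVert w\rVert_{L^p})$ contribution. Corollary~\ref{corollary:nominal_stability} guarantees this nominal system is exponentially stable, i.e.\ $A_{\mathrm{cl}}$ is Schur, so there exists $\tilde\beta \in \mathcal{KL}$ (of exponential form $\tilde\beta(s,i) = \tilde M \tilde\nu^{\,i} s$) with $\lVert e_{k+i}\rVert \leq \tilde\beta(\lVert e_k\rVert, i)$ for all $i \leq \bar K$. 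Monotonicity of $\tilde\beta$ in its first argument and $\lVert e_k\rVert \leq \lVert x_k^{\star}\rVert + \lVert x_k^\dagger\rVert$ upgrade this to the deterministic bound $\lVert e_{k+i}\rVert \leq \tilde\beta(\lVert x_k^{\star}\rVert + \lVert x_k^\dagger\rVert, i)$, valid \emph{conditional} on both trajectories remaining in $\Phi$ throughout $i \leq \bar K$.

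Finally I would handle the probability bookkeeping. Since $x_k^{\star} \in \Omega^{\star}$ and $x_k^\dagger \in \Omega^\dagger$, Lemma~\ref{lemma:phi_rpi} gives $\mathbb{P}[x_{k+i}^{\star} \in \Phi,\ \forall i \leq \bar K] \geq 1-\varepsilon$ and likewise for the DR trajectory. A union (Fréchet) bound then yields
\[
\mathbb{P}\big[\{x_{k+i}^{\star} \in \Phi\} \cap \{x_{k+i}^\dagger \in \Phi\}\ \ \forall i \leq \bar K\big] \geq 1 - 2\varepsilon ,
\]
and on this event the deterministic bound above holds for every $i \leq \bar K$; combined with the trivial lower bound $0$ this establishes \eqref{eq:error_system_convergence}. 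The main obstacle is conceptual rather than computational: one must argue cleanly that inside $\Phi$ both controllers genuinely coincide (so the error dynamics are noise-free and governed by a single Schur matrix), and carefully track the event on which this holds so that the two single-trajectory guarantees of Lemma~\ref{lemma:phi_rpi} combine into the $\max\{0, 1-2\varepsilon\}$ confidence level.
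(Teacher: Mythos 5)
Your proposal is correct in outline and shares the paper's overall skeleton: reduction to the unconstrained LQR law inside $\Phi$ via Assumption \ref{as:phi}(i) and Corollary \ref{corollary_1}, exact cancellation of the common disturbance (Assumption \ref{as:same_disturbance}) so that the error obeys $e_{k+i} = (A - BH^{-1}\tilde{h})\,e_{k+i-1}$, and the identical Fr\'echet bound turning the two single-trajectory guarantees of Lemma \ref{lemma:phi_rpi} into the confidence level $\max\{0, 1-2\varepsilon\}$. Where you genuinely diverge is the middle estimate. The paper never claims the error dynamics matrix is contractive; it instead applies the triangle inequality $\lVert e_{k+i} \rVert \leq \lVert \bar{x}^{\star}_{k+i} \rVert + \lVert \bar{x}^{\dagger}_{k+i} \rVert$ and bounds each nominal trajectory \emph{separately} by its own eISSp rate $\beta^{\star}, \beta^{\dagger}$ from Assumption \ref{as:ISSp} via Corollary \ref{corollary:nominal_stability}, merging them into $\tilde{\beta}(s,i) = \max\{M^{\star}, M^{\dagger}\} (\max\{\nu^{\star}, \nu^{\dagger}\})^{i} s$. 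You contract the error directly, asserting that $A_{\mathrm{cl}} = A - BK$ is Schur ``by Corollary \ref{corollary:nominal_stability}.''

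That assertion is the one step that needs repair: Corollary \ref{corollary:nominal_stability} provides a finite-horizon ($i \leq \bar{K}$) decay bound on nominal \emph{state} trajectories of the MPC closed loop, which is not literally an asymptotic spectral (Schur) property of the LQR matrix, and the error $e_k$ is not itself a state of either system, so the corollary does not apply to it verbatim. The gap is closable, but requires an explicit homogeneity argument: for $x_k \in \Omega^{\diamond}$ the nominal trajectory satisfies $\lVert \bar{x}_{k+i} \rVert \leq M^{\diamond} (\nu^{\diamond})^{i} \lVert x_k \rVert \leq M^{\diamond} r_{\Omega^{\diamond}} \leq r_{\Phi}$ by the definition of $r_{\Omega^\diamond}$ in Assumption \ref{as:phi}, so it remains in $\Phi$, follows $\bar{x}_{k+i} = A_{\mathrm{cl}}^{i} x_k$, and therefore $\lVert A_{\mathrm{cl}}^{i} x_k \rVert \leq M^{\diamond} (\nu^{\diamond})^{i} \lVert x_k \rVert$ holds on all of $\Omega^{\diamond}$; by linearity of $A_{\mathrm{cl}}^{i}$ and rescaling, the same operator bound holds for arbitrary vectors, hence for $e_k$, for all $i \leq \bar{K}$ (no infinite-horizon Schur claim is needed). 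Once this is patched, your route actually yields a slightly \emph{stronger} conclusion than the paper's: your bound is $\tilde{\beta}(\lVert e_k \rVert, i)$, which vanishes when $x_k^{\star} = x_k^{\dagger}$, before being relaxed to $\tilde{\beta}(\lVert x_k^{\star} \rVert + \lVert x_k^{\dagger} \rVert, i)$ to match the statement, whereas the paper's triangle-inequality route is looser (it does not exploit cancellation in $e_k$) but sidesteps any operator-norm argument entirely. The probability bookkeeping in your last paragraph matches the paper's exactly.
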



\inConf{
\begin{proof}
    Suppose that $x_k^{\star} \in \Omega^\star$ and $x_k^\dagger \in \Omega^\dagger$ and set $\bar{x}_k^\star = x_k^\star$ and $\bar{x}_k^\dagger = x_k^\dagger$. Then, $\mathbb{P}[x_{k+1}^\star \in \Phi, \forall i \leq \bar{K}] \geq 1 - \varepsilon$ and $\mathbb{P}[x_{k+i}^\dagger \in \Phi, \forall i \leq \bar{K}] \geq 1 - \varepsilon$ by Lemma \ref{lemma:phi_rpi}.
    If $x_{k+i}^{\star}, x_{k+i}^\dagger \in \Phi$ for all $i \leq \bar{K}$, no constraints are active by Assumption \ref{as:phi} and the MPC feedback law reduces the linear quadratic regulator feedback law for both the fully informed and the DR case by Corollary \ref{corollary_1}.
    Hence, the error dynamics between the fully informed and the DR case is given by $e_{k+i} := x^\star_{k+i} - x^\dagger_{k+i} = (A - BH^{-1} \tilde{h}) e_{k+i-1}$.
    As the same disturbances will act in both the fully informed and the DR case by Assumption \ref{as:same_disturbance}, we can equivalently write the error in terms of the nominal systems as $e_{k+i} = (A - BH^{-1}\tilde{h}) (\bar{x}_{k+i-1}^\star - \bar{x}_{k+i-1}^\dagger)$.
    Exploiting stability according to Assumption \ref{as:ISSp} and Corollary \ref{corollary:nominal_stability}, the state trajectories of the fully informed and the DR converge to each other and we can bound the error from above by
    \begin{align}
        \| e_{k+i} \| & = \| \bar{x}_{k+i}^\star - \bar{x}_{k+i}^\dagger \| \leq \| \bar{x}_{k+i}^\star \| + \| \bar{x}_{k+i}^\dagger \| \nonumber \\ 
        & \leq \beta^\star( \| x_{k}^\star \|, i ) + \beta^\dagger( \| x_{k}^\dagger \|, i) \nonumber \\
        & = M^\star {(\nu^\star)}^i \| \bar{x}_k^\star \| + M^\dagger {(\nu^\dagger)}^i \| \bar{x}_k^\dagger \| \nonumber \\
        & \leq \max \{ M^\star, M^\dagger \} (\max \{ \nu^\star, \nu^\dagger \})^i (\| \bar{x}_k^\star \| + \| \bar{x}_k^\dagger \|) \nonumber \\
        & =: \tilde{\beta}(\| \bar{x}_k^\star \| + \| \bar{x}_k^\dagger \|, i) \label{eq:auxiliary_bound}
    \end{align}
    with $\beta^{\star}, \beta^\dagger$ from Assumption \ref{as:ISSp}.
    Clearly, $\tilde{\beta} \in \mathcal{KL}$.
    As the above considerations require both systems to be in $\Phi$ for all $i \leq \bar{K}$, \eqref{eq:auxiliary_bound} holds only true with probability $\mathbb{P}[(x^\star_{k+i} \in \Phi, \forall i \leq \bar{K}) \wedge (x^\dagger_{k+i} \in \Phi, \forall i \leq \bar{K})]$. Although $\mathbb{P}[x_{k+1}^\star \in \Phi, \forall i \leq \bar{K}] \geq 1 - \varepsilon$ and $\mathbb{P}[x_{k+i}^\dagger \in \Phi, \forall i \leq \bar{K}] \geq 1 - \varepsilon$ are known, both events need to be considered dependent as the fully informed and the DR case rely on the same disturbances. To lower bound their joint probability, we apply the Fréchet inequality, yielding
    \begin{align*}
        & \mathbb{P}[(x^\star_{k+i} \in \Phi, \forall i \leq \bar{K}) \wedge (x^\dagger_{k+i} \in \Phi, \forall i \leq \bar{K})] \\
        & ~~~~~~~~~~~~~~~~ \geq \max \{ 0, 2(1 - \varepsilon) - 1 \} = \max \{0, 1 - 2 \varepsilon \}.
    \end{align*}
    This concludes the proof.
\end{proof}
}

\inArxiv{
\begin{proof}
    Suppose that $x_k^{\star} \in \Omega^\star$ and $x_k^\dagger \in \Omega^\dagger$ and set $\bar{x}_k^\star = x_k^\star$ and $\bar{x}_k^\dagger = x_k^\dagger$. Then, $\mathbb{P}[x_{k+1}^\star \in \Phi, \forall i \leq \bar{K}] \geq 1 - \varepsilon$ and $\mathbb{P}[x_{k+i}^\dagger \in \Phi, \forall i \leq \bar{K}] \geq 1 - \varepsilon$ by Lemma \ref{lemma:phi_rpi}.
    If $x_{k+i}^{\star}, x_{k+i}^\dagger \in \Phi$ for all $i \leq \bar{K}$, no constraints are active by Assumption \ref{as:phi} and the MPC feedback law reduces the linear quadratic regulator feedback law for both the fully informed and the DR case by Corollary \ref{corollary_1}.
    Hence, the error dynamics between the fully informed and the DR case is given by $e_{k+i} := x^\star_{k+i} - x^\dagger_{k+i} = (A - BH^{-1} \tilde{h}) e_{k+i-1}$.
    As the same disturbances will act in both the fully informed and the DR case by Assumption \ref{as:same_disturbance}, we can equivalently write the error in terms of the nominal systems as $e_{k+i} = (A - BH^{-1}\tilde{h}) (\bar{x}_{k+i-1}^\star - \bar{x}_{k+i-1}^\dagger)$.
    Exploiting stability according to Assumption \ref{as:ISSp} and Corollary \ref{corollary:nominal_stability}, the state trajectories of the fully informed and the DR converge to each other and we can bound the error from above by
    \begin{align}
        \| e_{k+i} \| = \| \bar{x}_{k+i}^\star - \bar{x}_{k+i}^\dagger \| & \leq \| \bar{x}_{k+i}^\star \| + \| \bar{x}_{k+i}^\dagger \| \nonumber \\ 
        & \leq \beta^\star( \| x_{k}^\star \|, i ) + \beta^\dagger( \| x_{k}^\dagger \|, i) \nonumber \\
        & = M^\star {(\nu^\star)}^i \| \bar{x}_k^\star \| + M^\dagger {(\nu^\dagger)}^i \| \bar{x}_k^\dagger \| \nonumber \\
        & \leq \max \{ M^\star, M^\dagger \} (\max \{ \nu^\star, \nu^\dagger \})^i (\| \bar{x}_k^\star \| + \| \bar{x}_k^\dagger \|) \nonumber \\
        & =: \tilde{\beta}(\| \bar{x}_k^\star \| + \| \bar{x}_k^\dagger \|, i) \label{eq:auxiliary_bound}
    \end{align}
    with $\beta^{\star}, \beta^\dagger$ from Assumption \ref{as:ISSp}.
    Clearly, $\tilde{\beta} \in \mathcal{KL}$.
    As the above considerations require both systems to be in $\Phi$ for all $i \leq \bar{K}$, \eqref{eq:auxiliary_bound} holds only true with probability $\mathbb{P}[(x^\star_{k+i} \in \Phi, \forall i \leq \bar{K}) \wedge (x^\dagger_{k+i} \in \Phi, \forall i \leq \bar{K})]$. Although $\mathbb{P}[x_{k+1}^\star \in \Phi, \forall i \leq \bar{K}] \geq 1 - \varepsilon$ and $\mathbb{P}[x_{k+i}^\dagger \in \Phi, \forall i \leq \bar{K}] \geq 1 - \varepsilon$ are known, both events need to be considered dependent as the fully informed and the DR case rely on the same disturbances. To lower bound their joint probability, we apply the Fréchet inequality, yielding
    \begin{align*}
        \mathbb{P}[(x^\star_{k+i} \in \Phi, \forall i \leq \bar{K}) \wedge (x^\dagger_{k+i} \in \Phi, \forall i \leq \bar{K})] & \geq \max \{ 0, \mathbb{P}[x_{k+1}^\star \in \Phi, \forall i \leq \bar{K}] + \mathbb{P}[x_{k+1}^\dagger \in \Phi, \forall i \leq \bar{K}] - 1 \} \\
        & \geq \max \{ 0, 2(1 - \varepsilon) - 1 \} = \max \{0, 1 - 2 \varepsilon \}.
    \end{align*}
    This concludes the proof.
\end{proof}
}

By Lemma \ref{lemma:convergence_of_error}, the states of the system under the DR and the fully informed controller, respectively, converge to each other while in $\Phi$. 
We state the following theorem about $\mathfrak{R}_k^\mathrm{ol}$.

\begin{theorem}\label{theorem_2}
    Let Assumptions \ref{as:ISSp}-\ref{as:phi} hold.
    If $x_k^{\star} \in \Omega^\star$ and $x_k^\dagger \in \Omega^\dagger$, then $\exists \sigma \in \mathcal{KL}$ such that
    \begin{equation}
        \mathbb{P}[ \lVert \mathfrak{G}_{k+i} \rVert \! \leq \! \sigma(\lVert x_k^{\star} \rVert \! + \! \lVert x_k^\dagger \rVert , i ), \forall i \! \leq \! \bar{K} ] \! \geq \! \max \{ 0, 1 \! - \! 2\varepsilon \}.
    \end{equation}
\end{theorem}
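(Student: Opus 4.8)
The plan is to condition on the event on which both closed-loop trajectories remain inside $\Phi$ over the horizon $i \leq \bar{K}$, reduce $\mathfrak{G}_{k+i}$ to the unconstrained quadratic form from Corollary \ref{corollary_1}, and then bound its two factors separately: the \emph{difference} factor by the error estimate of Lemma \ref{lemma:convergence_of_error}, and the \emph{sum} factor by the radius $r_\Phi$ of $\Phi$.

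First I would introduce the event $E := \{ x_{k+i}^\star \in \Phi \text{ and } x_{k+i}^\dagger \in \Phi, \ \forall i \leq \bar{K} \}$. Since $x_k^\star \in \Omega^\star$ and $x_k^\dagger \in \Omega^\dagger$, Lemma \ref{lemma:phi_rpi} gives $\mathbb{P}[x_{k+i}^\star \in \Phi, \forall i \leq \bar{K}] \geq 1-\varepsilon$ and the analogous bound for the DR trajectory. Because both controllers are driven by the same disturbance realisation (Assumption \ref{as:same_disturbance}), these two events are dependent, so I would lower-bound $\mathbb{P}[E]$ by the Fréchet inequality exactly as in the proof of Lemma \ref{lemma:convergence_of_error}, obtaining $\mathbb{P}[E] \geq \max\{0, 1-2\varepsilon\}$.

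Next, working deterministically on $E$: by Assumption \ref{as:phi} no tightened state or input constraints are active while the states lie in $\Phi$, so Corollary \ref{corollary_1} applies with running initial condition $x_{0 \mid k+i} = x_{k+i}^\diamond$ at each step and yields $\mathfrak{G}_{k+i} = -(x_{k+i}^\star - x_{k+i}^\dagger)^\top \Lambda_1 (x_{k+i}^\star + x_{k+i}^\dagger)$. Since $\mathfrak{G}_{k+i}$ is scalar, the spectral-norm bound $\lvert a^\top M b \rvert \leq \lVert M \rVert_2 \lVert a \rVert \lVert b \rVert$ gives $\lVert \mathfrak{G}_{k+i} \rVert \leq \lVert \Lambda_1 \rVert_2 \, \lVert x_{k+i}^\star - x_{k+i}^\dagger \rVert \, \lVert x_{k+i}^\star + x_{k+i}^\dagger \rVert$. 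On $E$ both states satisfy $\lVert x_{k+i}^\diamond \rVert \leq r_\Phi$, so the sum factor is bounded by $2 r_\Phi$; and on the very same event Lemma \ref{lemma:convergence_of_error} bounds the difference factor by $\tilde{\beta}(\lVert x_k^\star \rVert + \lVert x_k^\dagger \rVert, i)$. Combining, on $E$ I obtain $\lVert \mathfrak{G}_{k+i} \rVert \leq \sigma(\lVert x_k^\star \rVert + \lVert x_k^\dagger \rVert, i)$ for all $i \leq \bar{K}$, where $\sigma(s,i) := 2 r_\Phi \lVert \Lambda_1 \rVert_2 \, \tilde{\beta}(s,i)$. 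Scaling a $\mathcal{KL}$ function by the positive constant $2 r_\Phi \lVert \Lambda_1 \rVert_2$ preserves the class, so $\sigma \in \mathcal{KL}$, and the claimed probability statement follows because the deterministic bound holds on $E$, whose probability is at least $\max\{0, 1-2\varepsilon\}$.

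The main obstacle — really a bookkeeping point — is that the error estimate of Lemma \ref{lemma:convergence_of_error} and the uniform bound $\lVert x_{k+i}^\diamond \rVert \leq r_\Phi$ must be asserted on a \emph{common} event, and that this event's probability cannot be obtained by multiplying the two marginal $(1-\varepsilon)$ bounds, since the fully informed and DR systems share disturbances and are therefore statistically dependent; the Fréchet inequality is what rescues the $\max\{0, 1-2\varepsilon\}$ bound. A secondary point worth verifying is that Corollary \ref{corollary_1} is legitimately invoked at every step $k+i$ along the horizon, which is exactly where recurrence of $\Phi$ from Lemma \ref{lemma:phi_rpi} is needed to guarantee inactivity of the constraints throughout.
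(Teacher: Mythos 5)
Your proposal is correct and follows essentially the same route as the paper's own proof: a Fréchet lower bound on the joint event that both trajectories remain in $\Phi$ over the horizon $\bar{K}$, reduction of $\mathfrak{G}_{k+i}$ to the unconstrained quadratic form of Corollary \ref{corollary_1}, the factorized bound via $\lVert \Lambda_1 \rVert$, $2 r_\Phi$, and $\tilde{\beta}$ from Lemma \ref{lemma:convergence_of_error}, and finally $\sigma(s,i) = 2 r_\Phi \lVert \Lambda_1 \rVert \tilde{\beta}(s,i) \in \mathcal{KL}$. If anything, your explicit conditioning on the common event $E$ states more carefully what the paper expresses through a somewhat loose chain of probabilistic equivalences.
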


\inConf{
\begin{proof}
    If $x_k^{\star} \in \Omega^\star$ and $x_k^\dagger \in \Omega^\dagger$, both systems remain jointly in $\Phi$ for horizon $\bar{K}$ with probability at least $\max \{ 0,1-2\varepsilon \}$ exploiting Fréchet bounds and the individual probabilities \eqref{eq:phi_bounded_in_p}.
    As no constraints are active for $x \in \Phi$, we find that $\forall i \leq \bar{K}$
    \begin{align*}
        & \mathbb{P} [\mathfrak{G}_{k+i} = \eqref{eq:regret_unconstr}] \! \geq \! \max \{ 0 , 1 \! - \! 2 \varepsilon \} \\
        \Leftrightarrow ~ & \mathbb{P} [\lVert \mathfrak{G}_{k+i} \rVert \! \leq \! \lVert x_{k+i}^{\star} \! - \! x_{k+i}^\dagger \rVert \lVert \Lambda_1 \rVert \lVert x_{k+i}^{\star} \! + \! x_{k+i}^\dagger \rVert] \\
        & \qquad \qquad \qquad \qquad \qquad \qquad \qquad ~~~ \geq \! \max \{ 0 , 1 \! - \! 2 \varepsilon \} \\
        \Leftrightarrow ~ & \mathbb{P}[ \lVert \mathfrak{G}_{k+i} \rVert \! \leq \! \tilde{\beta}(\lVert x_k^{\star} \rVert \! + \! \lVert x_k^\dagger \rVert, i) 2 \lVert \Lambda_1 \rVert r_\Phi ] \\
        & \qquad \qquad \qquad \qquad \qquad \qquad \qquad \quad  \geq \! \max \{ 0 , 1 \! - \! 2 \varepsilon \},
    \end{align*}
    where $\lVert x_{k+i}^{\star} \! + \! x_{k+i}^\dagger \rVert \leq \lVert x_{k+i}^{\star} \rVert \! + \! \lVert x_{k+i}^\dagger \rVert \leq 2 r$ as $x_{k+i}^{\star}, x_{k+i}^\dagger \in \Phi$.
    Clearly, $2 \lVert \Lambda_1 \rVert r > 0$ and hence, $\sigma( \lVert x_k^{\star} \rVert \! + \! \lVert x_k^\dagger \rVert, i) = 2 \lVert \Lambda_1 \rVert r \tilde{\beta}(\lVert x_k^{\star} \rVert \! + \! \lVert x_k^\dagger \rVert, i)$ is a class $\mathcal{KL}$ function.
\end{proof}}

\inArxiv{
\begin{proof}
    If $x_k^{\star} \in \Omega^\star$ and $x_k^\dagger \in \Omega^\dagger$, both systems remain jointly in $\Phi$ for horizon $\bar{K}$ with probability at least $\max \{ 0,1-2\varepsilon \}$ exploiting Fréchet bounds and the individual probabilities \eqref{eq:phi_bounded_in_p}.
    As no constraints are active for $x \in \Phi$, we find that $\forall i \leq \bar{K}$
    \begin{align*}
        & \mathbb{P} [\mathfrak{G}_{k+i} = \eqref{eq:regret_unconstr}] \! \geq \! \max \{ 0 , 1 \! - \! 2 \varepsilon \} \\
        \Leftrightarrow ~ & \mathbb{P} [\lVert \mathfrak{G}_{k+i} \rVert \! \leq \! \lVert x_{k+i}^{\star} \! - \! x_{k+i}^\dagger \rVert \lVert \Lambda_1 \rVert \lVert x_{k+i}^{\star} \! + \! x_{k+i}^\dagger \rVert] \! \geq \! \max \{ 0 , 1 \! - \! 2 \varepsilon \} \\
        \Leftrightarrow ~ & \mathbb{P}[ \lVert \mathfrak{G}_{k+i} \rVert \! \leq \! \tilde{\beta}(\lVert x_k^{\star} \rVert \! + \! \lVert x_k^\dagger \rVert, i) 2 \lVert \Lambda_1 \rVert r_\Phi ] \! \geq \! \max \{ 0 , 1 \! - \! 2 \varepsilon \},
    \end{align*}
    where $\lVert x_{k+i}^{\star} \! + \! x_{k+i}^\dagger \rVert \leq \lVert x_{k+i}^{\star} \rVert \! + \! \lVert x_{k+i}^\dagger \rVert \leq 2 r$ as $x_{k+i}^{\star}, x_{k+i}^\dagger \in \Phi$.
    Clearly, $2 \lVert \Lambda_1 \rVert r > 0$ and hence, $\sigma( \lVert x_k^{\star} \rVert \! + \! \lVert x_k^\dagger \rVert, i) = 2 \lVert \Lambda_1 \rVert r \tilde{\beta}(\lVert x_k^{\star} \rVert \! + \! \lVert x_k^\dagger \rVert, i)$ is a class $\mathcal{KL}$ function.
\end{proof}}

Theorem \ref{theorem_2} states that $\mathfrak{G}_k$ decreases with high probability over a finite number of subsequent time steps from $k$ on.
That is, $\mathfrak{G}_k$ will converge to zero as long as no disturbance realization causes the system states $x_k^{\star}$ or $x_k^\dagger$ to leave $\Phi$.

\inConf{
We now analyze $\mathfrak{R}_k$.
To this end, we express the closed-loop input $u^\diamond_k = u_{0 \mid k}^\diamond$ (the first element of \eqref{eq:qp_solution}) and state realization $\hat{x}_k^\diamond$ explicitly in terms of the initial state $x_0$ and the disturbance realizations $(\hat{w}_i)_{i \in [0:k-1]}$ as
\begin{multline}\label{eq:closed_loop_x}
    \hat{x}_k^\diamond = \left( A^k + \sum_{i=0}^{k-1} A^i B \Psi_{k-1-i}^\diamond \right) x_0 \\ + \sum_{i=0}^{k-1} A^i B \gamma_{k-1-i}^\diamond + \sum_{i=0}^{k-1} A^i \hat{w}_{k-1-i},
\end{multline}
\vspace{-7pt}
\begin{align}\label{eq:closed_loop_u}
    u_k^\diamond = \underbrace{P^\diamond_k \left( A^k + \sum_{i=0}^{k-1} A^i B \Psi_{k-1-i}^\diamond \right)}_{=:\Psi_k^\diamond} x_0 + \gamma^\diamond_k.
\end{align}
Therein, the closed-loop input is expressed as $u_k^\diamond = P_k^\diamond \hat{x}_k^\diamond + q_k^\diamond$ with $P_k^\diamond$ and $q_k^\diamond$ implicitly defined via \eqref{eq:qp_solution} and
\begin{align}
    \gamma_k^\diamond = P_k^\diamond \left( \sum_{i=0}^{k-1} A^i B \gamma_{k-1-i}^\diamond + \sum_{i=0}^{k-1} A^i \hat{w}_{k-1-i} \right) + q_k^\diamond.
\end{align}
Using \eqref{eq:closed_loop_x} and \eqref{eq:closed_loop_u} in \eqref{eqn_regretDef_cl} enables further analysis.
}
\inArxiv{
We now analyze $\mathfrak{R}_k$.
To this end, we express the closed-loop input $u^\diamond_k = u_{0 \mid k}^\diamond$ (the first element of \eqref{eq:qp_solution}) and state realization $\hat{x}_k^\diamond$ explicitly in terms of the initial state $x_0$ and the disturbance realizations $(\hat{w}_i)_{i \in [0:k-1]}$ as
\begin{align}
    \hat{x}_k^\diamond &= \left( A^k + \sum_{i=0}^{k-1} A^i B \Psi_{k-1-i}^\diamond \right) x_0 + \sum_{i=0}^{k-1} A^i B \gamma_{k-1-i}^\diamond + \sum_{i=0}^{k-1} A^i \hat{w}_{k-1-i}, \label{eq:closed_loop_x} \\
    u_k^\diamond &= \underbrace{P^\diamond_k \left( A^k + \sum_{i=0}^{k-1} A^i B \Psi_{k-1-i}^\diamond \right)}_{=:\Psi_k^\diamond} x_0 + \gamma^\diamond_k. \label{eq:closed_loop_u}
\end{align}
Therein, the closed-loop input is expressed as $u_k^\diamond = P_k^\diamond \hat{x}_k^\diamond + q_k^\diamond$ with $P_k^\diamond$ and $q_k^\diamond$ implicitly defined via \eqref{eq:qp_solution} and
\begin{align}
    \gamma_k^\diamond = P_k^\diamond \left( \sum_{i=0}^{k-1} A^i B \gamma_{k-1-i}^\diamond + \sum_{i=0}^{k-1} A^i \hat{w}_{k-1-i} \right) + q_k^\diamond.
\end{align}
Using \eqref{eq:closed_loop_x} and \eqref{eq:closed_loop_u} in \eqref{eqn_regretDef_cl} enables further analysis.
}


\inConf{
\begin{theorem}\label{theorem:closed_loop_regret}
    Let Assumptions \ref{as:ISSp}-\ref{as:phi} hold and let $\hat{x}^{\star}_{k-\kappa} \in \Omega^\star$ and $\hat{x}^\dagger_{k-\kappa} \in \Omega^\dagger$ for some $\kappa \in \bbn$. 
    Further, let the disturbance realization $(\hat{w}_{k-\kappa+i})_{i \in [0:\kappa-1]}$ be such that $\forall i \in \llbracket \kappa \rrbracket: \hat{x}_{k-\kappa+i}^{\star}, \hat{x}_{k-\kappa+i}^\dagger \in \Phi$.
    Then, $\exists \lambda \in \mathcal{KL}$ such that $\forall i \in \llbracket \kappa \rrbracket$ it holds that
    \begin{multline}
        \lVert \mathfrak{R}_{k-\kappa+i} \rVert \leq \lVert \mathfrak{R}_{k-\kappa+i-1} \rVert \\ + \lambda( \lVert \hat{x}_{k-\kappa}^{\star} \rVert + \lVert \hat{x}_{k-\kappa}^\dagger \rVert, i ).
    \end{multline}
\end{theorem}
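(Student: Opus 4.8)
The plan is to telescope the regret and reduce everything to a bound on the one-step increment. By the definition \eqref{eqn_regretDef_cl}, for $j := k-\kappa+i$ one has $\mathfrak{R}_{j} - \mathfrak{R}_{j-1} = \bigl(\lVert \hat{x}_{j}^\dagger \rVert_Q^2 + \lVert u_{j}^\dagger \rVert_R^2\bigr) - \bigl(\lVert \hat{x}_{j}^{\star} \rVert_Q^2 + \lVert u_{j}^{\star} \rVert_R^2\bigr)$, and the triangle inequality gives $\lVert \mathfrak{R}_{j} \rVert \leq \lVert \mathfrak{R}_{j-1} \rVert + \lVert \mathfrak{R}_{j} - \mathfrak{R}_{j-1} \rVert$. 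Thus it suffices to bound the single-step stage-cost difference by a class-$\mathcal{KL}$ function of $\lVert \hat{x}_{k-\kappa}^{\star} \rVert + \lVert \hat{x}_{k-\kappa}^\dagger \rVert$ and $i$, uniformly for $i \in \llbracket \kappa \rrbracket$.

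Next I would exploit that the hypothesis keeps $\hat{x}_{k-\kappa+i}^{\star}, \hat{x}_{k-\kappa+i}^\dagger \in \Phi$ for all $i \in \llbracket \kappa \rrbracket$, together with $\hat{x}_{k-\kappa}^{\star} \in \Omega^\star \subseteq \Phi$ and $\hat{x}_{k-\kappa}^\dagger \in \Omega^\dagger \subseteq \Phi$. By Assumption \ref{as:phi} no constraints are active along this stretch, so Corollary \ref{corollary_1} applies and both closed-loop inputs reduce to the \emph{same} linear feedback $u_j^\diamond = K \hat{x}_j^\diamond$, where $K$ is the first block row of $-H^{-1}\tilde{h}$ (identical for the DR and the fully informed controller, the difference residing only in the state argument). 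Each stage cost therefore collapses to $\lVert \hat{x}_j^\diamond \rVert_Q^2 + \lVert u_j^\diamond \rVert_R^2 = \lVert \hat{x}_j^\diamond \rVert_S^2$ with $S := Q + K^\top R K \succeq 0$, and the increment factors as a difference of quadratic forms, $\mathfrak{R}_{j} - \mathfrak{R}_{j-1} = \lVert \hat{x}_j^\dagger \rVert_S^2 - \lVert \hat{x}_j^{\star} \rVert_S^2 = (\hat{x}_j^\dagger - \hat{x}_j^{\star})^\top S (\hat{x}_j^\dagger + \hat{x}_j^{\star})$. Submultiplicativity then yields $\lVert \mathfrak{R}_{j} - \mathfrak{R}_{j-1} \rVert \leq \lVert S \rVert \, \lVert \hat{x}_j^{\star} - \hat{x}_j^\dagger \rVert \, \lVert \hat{x}_j^{\star} + \hat{x}_j^\dagger \rVert$.

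It then remains to control the two state-dependent factors. Since both trajectories lie in $\Phi$, the boundedness $\lVert \hat{x}_j^{\star} + \hat{x}_j^\dagger \rVert \leq \lVert \hat{x}_j^{\star} \rVert + \lVert \hat{x}_j^\dagger \rVert \leq 2 r_\Phi$ holds. For the error factor I would invoke the deterministic estimate underlying Lemma \ref{lemma:convergence_of_error}: as both controllers run the common LQR law while in $\Phi$ and, by Assumption \ref{as:same_disturbance}, see the same disturbance, the disturbance terms cancel in the error $e_j = \hat{x}_j^{\star} - \hat{x}_j^\dagger$, which then obeys the stable, purely deterministic recursion, giving $\lVert \hat{x}_{k-\kappa+i}^{\star} - \hat{x}_{k-\kappa+i}^\dagger \rVert \leq \tilde{\beta}\bigl(\lVert \hat{x}_{k-\kappa}^{\star} \rVert + \lVert \hat{x}_{k-\kappa}^\dagger \rVert, i\bigr)$; no probabilistic qualifier is needed because the theorem already conditions on a realization that keeps both trajectories in $\Phi$. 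Combining the three bounds gives $\lVert \mathfrak{R}_{j} - \mathfrak{R}_{j-1} \rVert \leq 2 r_\Phi \lVert S \rVert \, \tilde{\beta}\bigl(\lVert \hat{x}_{k-\kappa}^{\star} \rVert + \lVert \hat{x}_{k-\kappa}^\dagger \rVert, i\bigr)$, and setting $\lambda(s,i) := 2 r_\Phi \lVert S \rVert \, \tilde{\beta}(s,i)$—a class-$\mathcal{KL}$ function, since $2 r_\Phi \lVert S \rVert > 0$ is a constant and $\tilde{\beta} \in \mathcal{KL}$—closes the argument. The hard part will be the careful justification that the error factor satisfies this deterministic $\tilde{\beta}$-bound under the conditioning of the theorem (i.e. extracting the deterministic error recursion from the probabilistic Lemma \ref{lemma:convergence_of_error}); the remaining steps are the triangle inequality, the quadratic-form factorization, and the boundedness of $\Phi$.
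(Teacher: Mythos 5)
Your proposal is correct, and it shares the paper's overall skeleton---telescope the regret, invoke Corollary \ref{corollary_1} to reduce both controllers to the common LQR feedback inside $\Phi$, bound the state error via the deterministic $\tilde{\beta}$-estimate from Lemma \ref{lemma:convergence_of_error} (valid without the probabilistic qualifier precisely because the theorem conditions on a disturbance realization keeping both trajectories in $\Phi$, which is also how the paper uses it)---but it handles the key intermediate step differently, and in fact more rigorously. The paper passes from the increment identity to the bound \eqref{eq:cl_regret_bound_finite_horizon}, i.e.\ $\lVert \mathfrak{R}_{j} \rVert \leq \lVert \mathfrak{R}_{j-1} \rVert + \lVert \hat{x}^\dagger_{j} - \hat{x}^{\star}_{j} \rVert^2_Q + \lVert u^\dagger_{j} - u^{\star}_{j} \rVert^2_R$, calling this the triangle inequality; but $\bigl| \lVert a \rVert_Q^2 - \lVert b \rVert_Q^2 \bigr| \leq \lVert a - b \rVert_Q^2$ is false in general (take $Q = I$, $a = 2e_1$, $b = e_1$: the left side is $3$, the right side $1$), so the paper's subsequent bounds \eqref{eq:diff_u_bound}--\eqref{eq:diff_x_bound}, and its resulting $\lambda \propto \tilde{\beta}^2$, rest on a step that does not hold as literally stated. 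Your factorization $\lVert a \rVert_S^2 - \lVert b \rVert_S^2 = (a-b)^\top S (a+b)$ with $S = Q + K^\top R K$ symmetric, followed by $\lVert a + b \rVert \leq 2 r_\Phi$ from the boundedness of $\Phi$, is exactly the correct replacement: it buys a valid one-step bound $\lVert \mathfrak{R}_j - \mathfrak{R}_{j-1} \rVert \leq 2 r_\Phi \lVert S \rVert \, \tilde{\beta}(\lVert \hat{x}^{\star}_{k-\kappa} \rVert + \lVert \hat{x}^\dagger_{k-\kappa} \rVert, i)$ at the modest cost of an extra factor $r_\Phi$, and yields $\lambda(s,i) = 2 r_\Phi \lVert S \rVert \tilde{\beta}(s,i) \in \mathcal{KL}$ (linear in $\tilde{\beta}$ rather than quadratic---either way class $\mathcal{KL}$, so the theorem's statement is unaffected). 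The one point you flag as delicate, extracting the deterministic error bound, is indeed sound: with no active constraints both closed loops use the same gain $K$, Assumption \ref{as:same_disturbance} cancels the disturbances in $e_j = \hat{x}^{\star}_j - \hat{x}^\dagger_j$, so $e_{k-\kappa+i} = (A - BK)^i e_{k-\kappa}$, and the $\tilde{\beta}$-bound follows by splitting $\lVert e_{k-\kappa+i} \rVert \leq \lVert (A-BK)^i \hat{x}^{\star}_{k-\kappa} \rVert + \lVert (A-BK)^i \hat{x}^\dagger_{k-\kappa} \rVert$ and applying Corollary \ref{corollary:nominal_stability} as in the proof of Lemma \ref{lemma:convergence_of_error}. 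In short: same architecture, but your quadratic-form decomposition repairs the one step where the paper's argument is loose.
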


\begin{proof}
    The assertion follows from
    \begin{enumerate}
        \item[(i)] applying the triangle inequality to \eqref{eqn_regretDef_cl},
        \item[(ii)] upper bounding $\lVert \mathfrak{R}_{k-\kappa+i} \rVert - \lVert \mathfrak{R}_{k-\kappa+i-1} \rVert$ in terms of $\lVert x_{k-\kappa+i}^{\star} - x_{k-\kappa+i}^\dagger \rVert$,
        \item[(iii)] using Lemma \eqref{lemma:convergence_of_error} to upper bound $\lVert x_{k-\kappa+i}^{\star} - x_{k-\kappa+i}^\dagger \rVert$ by $\tilde{\beta}(\lVert \hat{x}_{k-\kappa}^{\star} \rVert + \lVert \hat{x}_{k-\kappa}^\dagger \rVert, i)$, and
        \item[(iv)] rewriting this upper bound by a function $\lambda \in \mathcal{KL}$.
    \end{enumerate}
    We omit details for brevity and refer the reader to \cite{Pfefferkorn2023}.
\end{proof}
}

\inArxiv{
\begin{theorem}\label{theorem:closed_loop_regret}
    Let Assumptions \ref{as:ISSp}-\ref{as:phi} hold and let $\hat{x}^{\star}_{k-\kappa} \in \Omega^\star$ and $\hat{x}^\dagger_{k-\kappa} \in \Omega^\dagger$ for some $\kappa \in \bbn$. 
    Further, let the disturbance realization $(\hat{w}_{k-\kappa+i})_{i \in [0:\kappa-1]}$ be such that $\forall i \in \llbracket \kappa \rrbracket: \hat{x}_{k-\kappa+i}^{\star}, \hat{x}_{k-\kappa+i}^\dagger \in \Phi$.
    Then, $\exists \lambda \in \mathcal{KL}$ such that $\forall i \in \llbracket \kappa \rrbracket$ it holds that
    \begin{equation}
        \lVert \mathfrak{R}_{k-\kappa+i} \rVert \leq \lVert \mathfrak{R}_{k-\kappa+i-1} \rVert + \lambda( \lVert \hat{x}_{k-\kappa}^{\star} \rVert + \lVert \hat{x}_{k-\kappa}^\dagger \rVert, i ).
    \end{equation}
\end{theorem}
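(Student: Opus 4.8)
The plan is to exploit the additive (telescoping) structure of the closed-loop regret \eqref{eqn_regretDef_cl}. Writing $j := k-\kappa+i$, observe that $\mathfrak{R}_j - \mathfrak{R}_{j-1}$ equals the single per-step increment $\big( \lVert \hat{x}_j^\dagger \rVert_Q^2 + \lVert u_j^\dagger \rVert_R^2 \big) - \big( \lVert \hat{x}_j^{\star} \rVert_Q^2 + \lVert u_j^{\star} \rVert_R^2 \big)$. Applying the triangle inequality in the form $\lVert \mathfrak{R}_j \rVert \leq \lVert \mathfrak{R}_{j-1} \rVert + \lVert \mathfrak{R}_j - \mathfrak{R}_{j-1} \rVert$ reduces the claim to bounding the norm of this single increment by $\lambda( \lVert \hat{x}_{k-\kappa}^{\star} \rVert + \lVert \hat{x}_{k-\kappa}^\dagger \rVert, i)$ for a suitable $\lambda \in \mathcal{KL}$.

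Next I would bound the increment in terms of the state mismatch $e_j := \hat{x}_j^{\star} - \hat{x}_j^\dagger$. The key algebraic device is the difference-of-quadratic-forms identity $\lVert a \rVert_S^2 - \lVert b \rVert_S^2 = (a-b)^\top S (a+b)$, which applied to the state terms gives $\lVert \hat{x}_j^\dagger \rVert_Q^2 - \lVert \hat{x}_j^{\star} \rVert_Q^2 = -\, e_j^\top Q (\hat{x}_j^{\star} + \hat{x}_j^\dagger)$, and analogously for the input terms. Since the hypothesis guarantees $\hat{x}_j^{\star}, \hat{x}_j^\dagger \in \Phi$ for all $i \in \llbracket \kappa \rrbracket$, no state or input constraints are active by Assumption \ref{as:phi}, so by Corollary \ref{corollary_1} both closed-loop inputs are the same linear feedback evaluated at the respective states; hence $u_j^{\star} - u_j^\dagger$ is a bounded linear image of $e_j$. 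Combining this with the fact that $\Phi$ lies in a ball of radius $r_\Phi$ (so that $\lVert \hat{x}_j^{\star} + \hat{x}_j^\dagger \rVert \leq 2 r_\Phi$ and the corresponding input sum is likewise bounded), Cauchy--Schwarz together with submultiplicativity of the norms yields a constant $C > 0$, depending only on $Q$, $R$, the feedback gain and $r_\Phi$, such that $\lVert \mathfrak{R}_j - \mathfrak{R}_{j-1} \rVert \leq C \lVert e_j \rVert$.

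Finally I would invoke Lemma \ref{lemma:convergence_of_error}: because the prescribed disturbance realization keeps both trajectories in $\Phi$, the pathwise bound \eqref{eq:auxiliary_bound} from its proof applies and gives $\lVert e_j \rVert = \lVert \hat{x}_{k-\kappa+i}^{\star} - \hat{x}_{k-\kappa+i}^\dagger \rVert \leq \tilde{\beta}( \lVert \hat{x}_{k-\kappa}^{\star} \rVert + \lVert \hat{x}_{k-\kappa}^\dagger \rVert, i)$ with $\tilde{\beta} \in \mathcal{KL}$. Setting $\lambda(s,i) := C\, \tilde{\beta}(s,i)$, which is again class $\mathcal{KL}$ since $C > 0$ is a constant, then delivers the claimed one-step recursion. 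I expect the main obstacle to be Step~2, namely verifying that the per-step increment is genuinely dominated by $\lVert e_j \rVert$ alone: this hinges on confirming that the ``sum'' factors $\hat{x}_j^{\star} + \hat{x}_j^\dagger$ and $u_j^{\star} + u_j^\dagger$ remain uniformly bounded, which is exactly what confinement to the bounded recurrent set $\Phi$ (and the consequent reduction to the unconstrained feedback of Corollary \ref{corollary_1}) provides. Note that, unlike Theorem \ref{theorem_2}, no probability qualifier appears here, since the confinement to $\Phi$ is imposed directly on the disturbance realization rather than asserted only with probability $1 - 2\varepsilon$.
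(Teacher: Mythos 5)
Your proposal is correct and follows the same skeleton as the paper's proof: expand the regret recursion from \eqref{eqn_regretDef_cl}, bound the one-step increment via the triangle inequality, control the state mismatch through the pathwise bound \eqref{eq:auxiliary_bound} underlying Lemma \ref{lemma:convergence_of_error} (valid deterministically here, as you correctly note, since confinement to $\Phi$ is hypothesized on the realization rather than held only with probability $1-2\varepsilon$), use Corollary \ref{corollary_1} to express both closed-loop inputs as the same linear feedback, and assemble $\lambda \in \mathcal{KL}$ as a positive multiple of $\tilde{\beta}$. The one substantive divergence is at the increment bound, and there your version is actually the more careful one. The paper passes from the recursion to $\lVert \mathfrak{R}_{k-\kappa+i} \rVert \leq \lVert \mathfrak{R}_{k-\kappa+i-1} \rVert + \lVert \hat{x}^\dagger_{k-\kappa+i} - \hat{x}^{\star}_{k-\kappa+i} \rVert^2_Q + \lVert u^\dagger_{k-\kappa+i} - u^{\star}_{k-\kappa+i} \rVert^2_R$ "by the triangle inequality," but the underlying inequality $\bigl| \lVert a \rVert_Q^2 - \lVert b \rVert_Q^2 \bigr| \leq \lVert a - b \rVert_Q^2$ is false in general (take $Q = I$, $a = 2$, $b = 1$), so the paper's $\lambda \propto \tilde{\beta}^2$ rests on an invalid step. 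Your polarization identity $\lVert a \rVert_S^2 - \lVert b \rVert_S^2 = (a-b)^\top S (a+b)$, combined with the uniform bound $\lVert \hat{x}^{\star}_j + \hat{x}^\dagger_j \rVert \leq 2 r_\Phi$ (and its image under the feedback gain for the input terms), is exactly the repair needed: it yields an increment bound $C \lVert e_j \rVert$ that is linear in the error, giving $\lambda(s,i) = C\, \tilde{\beta}(s,i)$ with $C$ depending on $Q$, $R$, the gain $H^{-1}\tilde{h}$ and $r_\Phi$. The obstacle you flagged — boundedness of the sum factors — is precisely where confinement to $\Phi$ enters, and your handling of it makes the theorem's conclusion hold as stated, merely with a differently structured (and correctly derived) $\mathcal{KL}$ function than the paper's.
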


\begin{proof}
    According to \eqref{eqn_regretDef_cl}, the closed-loop regret over the finite horizon $[k-\kappa, k]$ is given by
    \begin{equation}
        \mathfrak{R}_{k-\kappa+i} = \mathfrak{R}_{k-\kappa+i-1} + \lVert \hat{x}^\dagger_{k-\kappa+i} \rVert^2_Q - \lVert \hat{x}^{\star}_{k-\kappa+i} \rVert^2_Q + \lVert u^\dagger_{k-\kappa+i} \rVert^2_R - \lVert u^{\star}_{k-\kappa+i} \rVert^2_R.
    \end{equation}
    Applying the triangle inequality then yields
    \begin{equation}\label{eq:cl_regret_bound_finite_horizon}
        \lVert \mathfrak{R}_{k-\kappa+i} \rVert \leq \lVert \mathfrak{R}_{k-\kappa+i-1} \rVert + \lVert \hat{x}^\dagger_{k-\kappa+i} - \hat{x}^{\star}_{k-\kappa+i} \rVert^2_Q  + \lVert u^\dagger_{k-\kappa+i} - u^{\star}_{k-\kappa+i} \rVert^2_R.
    \end{equation}
    Lemma \ref{lemma:convergence_of_error} implies that $\forall i \in \llbracket \kappa \rrbracket$
    \begin{equation}\label{eq:error_closed_loop}
        \lVert x_{k-\kappa+i}^{\star} - x_{k-\kappa+i}^\dagger \rVert \leq \tilde{\beta}( \lVert \hat{x}_{k-\kappa}^{\star} \rVert + \lVert \hat{x}_{k-\kappa}^\dagger \rVert, i ).
    \end{equation}
    Exploiting \eqref{eq:error_closed_loop} and Corollary \ref{corollary_1}, we find that
    \begin{align}
        \lVert u_{k-\kappa+i}^\dagger - u^{\star}_{k-\kappa+i} \rVert^2_R & \leq \lVert \tilde{h}^\top H^{-1} R H^{-1} \tilde{h} \rVert \cdot \tilde{\beta}^2( \lVert \hat{x}^{\star}_{k-\kappa} \rVert + \lVert \hat{x}_{k-\kappa}^\dagger \rVert, i), \label{eq:diff_u_bound} \\
        \lVert \hat{x}^\dagger_{k-\kappa+i} - \hat{x}^{\star}_{k-\kappa+i} \rVert_Q^2 & \leq \lVert Q \rVert \cdot \tilde{\beta}^2( \lVert \hat{x}^{\star}_{k-\kappa} \rVert + \lVert \hat{x}_{k-\kappa}^\dagger \rVert, i). \label{eq:diff_x_bound}
    \end{align}
    Substituting \eqref{eq:diff_u_bound} and \eqref{eq:diff_x_bound} in \eqref{eq:cl_regret_bound_finite_horizon}, we obtain
    \begin{equation}
        \lVert \mathfrak{R}_{k-\kappa+i} \rVert \leq \lVert \mathfrak{R}_{k-\kappa+i-1} \rVert + \lambda (\lVert \hat{x}^{\star}_{k-\kappa} \rVert + \lVert \hat{x}_{k-\kappa}^\dagger \rVert, i)
    \end{equation}
    with $\lambda (\lVert \hat{x}^{\star}_{k-\kappa} \rVert + \lVert \hat{x}_{k-\kappa}^\dagger \rVert, i) = (\lVert \tilde{h}^\top H^{-1} R H^{-1} \tilde{h} \rVert + \lVert Q \rVert) \tilde{\beta}^2( \lVert \hat{x}^{\star}_{k-\kappa} \rVert + \lVert \hat{x}_{k-\kappa}^\dagger \rVert, i)$. We see, by the construction of $\tilde{\beta}$ in Lemma \ref{lemma:convergence_of_error}, that $\lambda \in \mathcal{KL}$.
\end{proof}
}

By Theorem \ref{theorem:closed_loop_regret}, loosely speaking, $\mathfrak{R}_k$ converges to a constant value while the system states are in $\Phi$ for both the DR and the fully informed case.
Since $\mathfrak{R}_k$ and $\mathfrak{G}_k$ are mainly induced by active, excessively tightened state constraints, those results indicate the potential of dynamically allocating the individual risk budgets $\delta_i$ to the constraints by leveraging the insights obtained from regret analysis.


\section{Numerical Simulation} \label{sec_num_sim}


\begin{figure}
    \centering
    \input{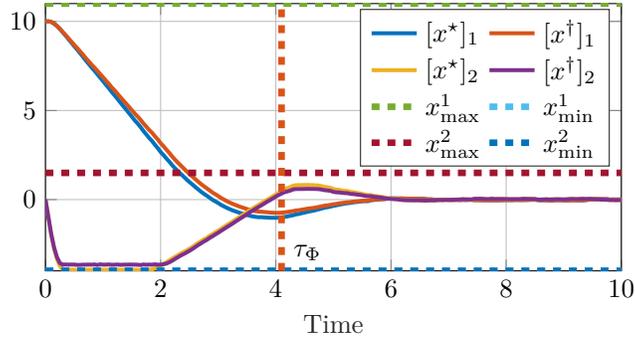}
    \caption{Simulation results for the DR and the fully informed case.}
    \label{fig_1}
\end{figure}

\begin{figure}
    \centering
    \input{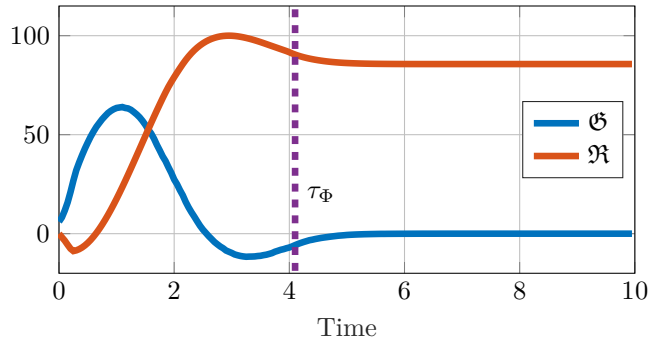}
    \caption{Evolution of closed-loop regret $\mathfrak{R}_k$ and suboptimality gap $\mathfrak{G}_k$ with respect to time.}
    \label{fig_2}
    \vspace{-10pt}
\end{figure}

We consider a discretized double integrator with discretization time step $dt = 0.05$ given by
\begin{align} \label{eqn_simulation_dynamics}
    A = \begin{bmatrix} 1 & dt \\ 0 & 1
    \end{bmatrix}, 
    B = \begin{bmatrix} 0 \\ dt
    \end{bmatrix}, 
    C = I_{2},  
    D = 0_{2 \times 1}. 
\end{align}
The model was simulated for a total of $T = 200$ time steps with a total risk budget of $\Delta = 0.1$. The prediction horizon was set to be $N = 5$ time steps. The process noise was sampled from a multivariate Laplacian distribution with zero mean and covariance of $\mathbf{\Sigma}^{\mathbf{w}} = {0.01}^{2} I_{2}$. The state and control penalty matrices are $Q = \mathrm{diag}(1, 0.1)$ and $R = 0.1$ respectively. The control and the state constraints are given by $u \in [-20, 2]$ and $x \in [-4, 11] \times [-4, 1.5]$.\footnote{The code is available under: \url{https://github.com/mpfefferk/Regret-and-Conservatism-of-SMPC}.}

The results of simulating the system given by \eqref{eqn_simulation_dynamics} using both the fully informed and the DR SMPC controller are shown in Figures \ref{fig_1} and \ref{fig_2}. 
Specifically, the states are plotted in Figure \ref{fig_1} and $\mathfrak{R}_k$ and $\mathfrak{G}_k$ associated with the DR SMPC controller are plotted in Figure \ref{fig_2}.
In the beginning, $\mathfrak{R}_k$ is negative as it is a short-sighted measure: initially, the increased cautiousness of the DR SMPC leads to lower closed-loop costs when compared to the fully informed SMPC.
At the same time, $\mathcal{G}_k$ increases rapidly: the DR SMPC is anticipated to perform worse in the future.
While at later times $\mathfrak{R}_k$ is positive, $\mathfrak{G}_k$ becomes negative.
The latter is because the system under the cautious DR controller is at that time closer to the reference than its fully informed counterpart and hence, its remaining cost-to-go is lower.
At $t = 4.1 s$, all constraints become inactive for both the DR and the fully informed controller, indicating that both systems have entered the (here implicitly defined) set $\Phi$.
From that time point on until the end of the simulation, we observe the convergence behavior of $\mathfrak{R}_k$ and $\mathfrak{G}_k$ proven in Theorems \ref{theorem_2} and \ref{theorem:closed_loop_regret}.


\section{Conclusion} \label{sec_conclusion}

A linear DR SMPC formulation exploiting moment-based ambiguity sets has been analyzed in regard of conservatism and regret.
We have analyzed the different aspects of regret, which together create the overall picture of its behavior.
Especially, regret was shown to converge to constant values over finite time periods and to increase only in between two such periods.
These findings were underlined by simulations.
Future research will be dedicated towards applying the established framework to extended SMPC formulations that may include online estimation or learning of unknown quantities.
Furthermore, the effect of non-uniform risk allocation on conservatism and regret is to be further investigated.

\inArxiv{
\section*{Acknowledgment}
This project has received funding from the European Research Council (ERC) under the European Union’s Horizon 2020 research and innovation program under grant agreement No 834142 (Scalable Control) and from the German Research Foundation (Research Training Group 2297). V. Renganathan is a member of the ELLIIT Strategic Research Area in Lund University.
}

\inConf{
\bibliographystyle{IEEEtran}
\bibliography{references}
}
\inArxiv{
\bibliographystyle{tmlr}
\bibliography{references}
}

\inArxiv{
\section*{Appendix}

We begin by stating the definitions of class $\mathcal{K}$ and class $\mathcal{KL}$ functions which are being used to prove the stability of the system that we consider. 

\begin{definition}
A scalar continuous function $\alpha(r)$, defined for $r \in [0, a), a > 0$ is said to belong to class $\mathcal{K}$ if it is strictly increasing and $\alpha(0) = 0$. It is said to belong to class $\mathcal{K}_{\infty}$ if it defined $\forall r \geq 0$ and $\alpha(r) \rightarrow \infty$ as $r \rightarrow \infty$.
\end{definition}
\begin{definition}
A scalar continuous function $\beta(r, s)$, defined for $r \in [0,a), a > 0$ and $s \in [0,\infty)$ is said to belong to class $\mathcal{KL}$ if, for each fixed s, the mapping $\beta(r, s)$ belongs to class $\mathcal{K}$ with respect to $r$ and, for each fixed $r$, the mapping $\beta(r, s)$ is decreasing with respect to $s$ and $\beta(r, s) \rightarrow 0$ as $s \rightarrow \infty$.
\end{definition}

Next, we define the $L^p$ norm of a random variable.

\begin{definition}(From \cite{Culbertson2023})
    A random variable $w \sim \mathbb{P}_w$ belongs to $L^p$, denoted by $w \! \in \! L^p$, for some $p \! > \! 0$ if $\lVert w \rVert_{L^p} := \mathbb{E} \left [ \lVert w \rVert^p \right ]^{\frac{1}{p}} < \infty$, where $\lVert \cdot \rVert$ defines a typical norm on $\bbr^{n}$.
\end{definition}

\textcolor{black}{Note that in the considered set-up, we assumed that $w_k$ has finite mean and (co-)variance, implying that $w_k \in L^2$. Under the effect of $w_k \in L^p$, we now state the definition of the stability of the system considered in \eqref{eqn_system_dynamics}.}

\textcolor{black}{Now, let system \eqref{eqn_system_dynamics} be ISSp, which expresses -- loosely speaking -- probabilistic convergence of the system states to a neighborhood of the origin.
Then, it is intuitive that the system will visit and return to any such neighborhood. 
This property is called \emph{recurrence} and formally defined as follows.}

\begin{definition}(From \cite{Culbertson2023})\label{def:recurrent}
    For a bounded set $\mathcal{S} \in \mathcal{X}$, the hitting time is defined as $\tau_{\mathcal{S}}(x) := \inf \{ k \in \bbn \mid x_k \in \mathcal{S}, x_0 = x \}$. The set $\mathcal{S}$ is recurrent if $\forall x \in \mathcal{X}, \mathbb{P} [ \tau_{\mathcal{S}}(x) < \infty ] = 1$.
\end{definition}

\textcolor{black}{For a proof of the connection between ISSp and recurrence, we refer the reader to \cite{Culbertson2023}.}
}

\end{document}